\newenvironment{breakablealgorithm}
  {
   \begin{center}
     \refstepcounter{algorithm}
     \hrule height.8pt depth0pt \kern2pt
     \renewcommand{\caption}[2][\relax]{
       {\raggedright\textbf{\ALG@name~\thealgorithm} ##2\par}%
       \ifx\relax##1\relax 
         \addcontentsline{loa}{algorithm}{\protect\numberline{\thealgorithm}##2}%
       \else 
         \addcontentsline{loa}{algorithm}{\protect\numberline{\thealgorithm}##1}%
       \fi
       \kern2pt\hrule\kern2pt
     }
  }{
     \kern2pt\hrule\relax
   \end{center}
  }
\numberwithin{equation}{section}
\newtheorem{theorem}{Theorem}
\newtheorem{remark}{Remark}
\numberwithin{equation}{section}
\begin{document}

\title{An Asymptotic-Preserving and Energy-Conserving Particle-In-Cell Method for Vlasov-Maxwell Equations
\footnote{Corresponding authors. Email addresses: \tt{yangzhiguo@sjtu.edu.cn}, \tt{xuzl@sjtu.edu.cn}}
}

\author[1]{Lijie Ji}
\author[*,1,4,5]{Zhiguo Yang}
\author[1]{Zhuoning Li}
\author[2]{Dong Wu}
\author[1,3,4]{Shi Jin}
\author[*,1,4,5]{Zhenli Xu}

\affil[1]{School of Mathematical Sciences, Shanghai Jiao Tong University, Shanghai 200240, China}
\affil[2]{School of Physics and Astronomy and Collaborative Innovation Center of IFSA, Shanghai Jiao Tong University, Shanghai 200240, China}
\affil[3]{Institute of Natural Sciences, Shanghai Jiao Tong University, Shanghai 200240, China}
\affil[4]{MOE-LSC, Shanghai Jiao Tong University, Shanghai 200240, China}
\affil[5]{CMA-Shanghai, Shanghai Jiao Tong University, Shanghai 200240, China}

\date{}
\maketitle

\begin{abstract}
In this paper, we develop an asymptotic-preserving and energy-conserving (APEC) Particle-In-Cell (PIC) algorithm for the Vlasov-Maxwell system.
This algorithm not only guarantees that the asymptotic limiting of the discrete scheme is a consistent and stable discretization of  the quasi-neutral limit
of the continuous model, but also preserves  Gauss's law and energy conservation at the same time, thus it is promising
to provide stable simulations of complex plasma systems even in the quasi-neutral regime.
The key ingredients for achieving these properties include the generalized Ohm's law for electric field such that the asymptotic-preserving
discretization can be achieved, and a proper decomposition of the effects of the electromagnetic fields such that a Lagrange multiplier method
can be appropriately employed for correcting the kinetic energy.
We investigate the performance of the APEC method with three benchmark tests in one dimension, including the linear Landau damping,
the bump-on-tail problem and the two-stream instability. Detailed comparisons are conducted by including the results
from the classical explicit leapfrog and the previously developed asymptotic-preserving PIC schemes. Our numerical experiments show that the proposed APEC scheme
can give accurate and stable simulations both kinetic  and quasi-neutral regimes,  demonstrating the attractive properties of the
method crossing scales. 

{\bf Key words}: Vlasov-Maxwell system, quasi-neutral limit, particle-in-cell method, asymptotic-preserving,  energy-conserving.

\end{abstract}

\section{Introduction}
Accurate simulation of collisionless plasma plays an  important role in a broad range of applications,
such as astrophysics, nuclear physics, inertial confinement fusion and material processing \cite{EVSTATIEV2013376,LANGDON1983107,DEGOND2017429}.
When classical continuum fluid models such as the magnetohydrodynamic equations fail to characterize the behavior of the plasmas under thermal or
chemical non-equilibrium conditions, the Vlasov-Maxwell (VM) system serves as a viable kinetic description under such circumstances.

There has been a longstanding interest among mathematicians and physicists in designing numerical algorithms that inherent some intrinsic properties
of the Vlasov-Maxwell equations \cite{MORRISON1980383,LEWIS1970136,2014Energy,2015Numerical,CROUSEILLES2015224,KOSHKAROV2021107866,book},
such as the preservation of conservation laws of total energy, momentum and charge \cite{chen2011energy,chen2020semi,
villasenor1992rigorous,lapenta2011particle,UMEDA2009365,SHIROTO2022111522,OCONNOR2022108345,belaouar2009asymptotically},  mimicking  the asymptotic transition
among multiple scales at the discrete level \cite{Jinreview, 2010Asymptotic,degond2017asymptotic,belaouar2009asymptotically,golse2003vlasov,
peng2002asymptotic}, or the exploration of modern exascale computer architectures for  high performance computing
 \cite{2017SMILEI}.

Despite the simplicity in its formulation, the VM system is notoriously difficult to solve numerically due to the high dimensionality
of the Vlasov equation. In real simulations, if one adopts grid-based numerical methods to solve the six-dimensional Vlasov equation,
the computational cost will be prohibitively  expensive due to ``the curse of dimensionality''. One well-established way to tackle this issue is
the Particle-In-Cell (PIC) method \cite{langdon1992enforcing,hockney2021computer,birdsall2018plasma,dawson1983particle,1962One,2008Computational,barthelme2007generalized},
in which  Newton's second law of motion for a sequence of macro particles is solved in stead of  the Vlasov equation.
The charge and current densities in the source term of the Maxwell equations are then calculated through a particle-to-grid assignment technique.
Once the electromagnetic fields are updated, they can be interpolated back to evaluate the Lorentz force to update the velocities and positions of the macro particles.

Besides the high dimensionality of the Vlasov equation, another extremely challenging issue for numerically simulating the VM system is the existence of multi-scale
transitions \cite{spaceplasma,1994Coupling}.  For instance, non-neutral and quasi-neutral regions could coexist in the domain of interest and evolve with time together.
To be more specific, there are two typical physical parameters of plasma, i.e. the Debye length and the electron plasma period \cite{chen1984introduction}.
The former characterizes the average distance between ions and electrons, and the latter represents the oscillation period of the electrons.
When both parameters are relatively small compared with the typical macroscopic scales of the system, the corresponding system is called the quasi-neutral limit.
When quasi-neutral areas exist and the kinetic description is applied, one needs to adopt very small spatial and temporal steps
in order to obtain accurate and stable simulations. This is time consuming and practically prohibitive.

In order to overcome this numerical obstacle and accelerate the calculation where non-neutral and quasi-neutral areas coexist,
the asymptotic-preserving (AP) method has been widely used in plasma physics \cite{degond2017asymptotic,2010Asymptotic,DEGOND20121917,CORDIER20125685,Degond2012,2013Second,2019High,Ameres2018Stochastic}.
In Degond {\it et al.} \cite{degond2017asymptotic}, a reformulated VM system is proposed, which unifies the models of non-neutral and quasi-neutral limit in a single set of equations
by a generalized Ohm's law, allowing for a smooth transition from one model to the other. The PIC method designed based on this reformulated VM system has proven to be stable
in both numerically resolved and under-resolved cases, which makes it possible to simulate multiscale and complex systems with much less computational cost.

However, when one approximates the probability distribution function in the reformulated VM system via macro particles,
the generalized Ohm's law and  Newton's second law for particle motion both exist in the resultant particle-Maxwell system which can be viewed as a continuous and kinetic description
of the same set of equations. This redundancy destroys the energy conservation of the reformulated particle-Maxwell system.
To the best of our knowledge, there is no asymptotic-preserving schemes for the PIC discretization that also preserve the energy conservation law.
With the help of the reformulated VM model \cite{degond2017asymptotic}, in this paper we develop an asymptotic-preserving and energy-conserving scheme that are efficient in 
 both quasi-neutral and non-neutral systems. The main idea draws inspiration from the Lagrange multiplier method proposed by Antoine {\it et al.} \cite{antoine2021scalar}.
However, a direct application of this technique will destroy the discrete Gauss' law, which makes the side effects outweigh the benefits.
In this work, we properly introduce a nonlocal Lagrange parameter at each time step to correct the kinetic energy, without destroying the physical properties,
such as the charge conservation and the asymptotic-preserving nature of the discrete algorithm. The Lagrange parameter is solved through a scalar quadratic equation
and the computational complexity of this APEC scheme is almost the same as the original AP algorithm. We perform many numerical results to demonstrate
the attractive feature of the scheme in this paper.

The organization of this paper is as follows. In Section \ref{sec:VlasovMaxwell}, we introduce the
classical VM system and the reformulated Vlasov-Maxwell system proposed in Ref. \cite{degond2017asymptotic}. In Section \ref{sec:APEC:VP}, the original AP scheme
is recast into an operator-splitting framework and the APEC scheme is proposed. Specifically, we show the detailed steps of the APEC scheme for the Vlasov-Poisson system.
The numerical tests are performed with three classical plasma problems in the electrostatic limit, including the linear Landau damping, the bump-on-tail problem
and the two-stream instability in Section \ref{sec:numerical}, together with a detailed analysis of these results. Finally, concluding remarks are made
in Section \ref{sec:conclusion}.

\section{The Vlasov-Maxwell system and its reformulated form}
\label{sec:VlasovMaxwell}

In this section, we will introduce the classical VM system and  review the reformulated VM system \cite{degond2017asymptotic} which is consistent with the quasi-neutral limit. Then, we will briefly show the PIC method used to approximate the continuous Vlasov equation and the energy-conserving properties of the related models.

\subsection{The Vlasov-Maxwell system}

Consider the collisionless plasma where the density distribution function $f_s(\bm x,\bm v,t)$ of species $s$ satisfies the Vlasov equation. The electric field $\bm{E}$ and magnetic field $\bm{B}$ satisfy the Maxwell equations where the source terms, namely, the current density $\bm{J}$ and electric charge density $\rho$, are calculated by the density distributions of all species. The whole Vlasov-Maxwell system is given by,
\begin{subequations}
\begin{align}
 &\displaystyle \partial_{t}f_s(\bm x,\bm v,t) +\bm v \cdot\nabla_{\bm x}f_s(\bm x,\bm v,t) +\frac{q_se}{m_s}(\bm E(\bm x,t)+\bm v \times \bm B(\bm x,t))\cdot\nabla_{\bm v}f_s(\bm x,\bm v,t)=0,\\
    &\displaystyle \frac{1}{c^{2}}\partial_{t}\bm E(\bm x,t)-\nabla\times \bm B(\bm x,t)=-\mu_{0} \bm J(\bm x,t),\\
    &\displaystyle \partial_{t} \bm B(\bm x,t)+\nabla\times \bm E(\bm x,t)=0,\\
    &\displaystyle \nabla\cdot \bm E(\bm x,t)=\frac{\rho(\bm x,t)}{\epsilon_{0}},\\
    &\displaystyle \nabla\cdot{\bm B(\bm x,t)}=0,
\end{align}
\end{subequations}
where $(\bm x,\bm v,t)\in \Omega_{\bm x} \times \Omega_{\bm v} \times \mathbb{R}^+ $ are  spatial, velocity and temporal variables, $m_s$ and $q_s$  are the mass and valence of species $s$, $e$ is the elementary charge, $c$ is the speed of light, $\epsilon_0$ and $\mu_0$ are the permittivity and permeability in the vacuum. The electron number density $n_s(\bm x,t)$, the electric charge density $\rho(\bm x,t)$, the current density $\bm J(\bm x,t)$ and the stress tensor $\mathcal{S}_s(\bm x,t)$ are defined as follows:
\begin{equation}\label{eq:JS}
n_s =\int_{\Omega_{\bm v}} f_sd\bm v, ~~\rho =\sum\limits_s q_se n_s, ~~\bm J=\sum\limits_{s}q_{s}e\int_{\Omega_{\bm v}}f_{s}\bm v d\bm v, ~~\mathcal{S}_s =\int_{\Omega_{\bm v}}f_s\bm v\otimes \bm vd\bm v.
\end{equation}

\subsection{The reformulated Vlasov-Maxwell system}
We briefly  describe the non-dimensionalization of physical variables and parameters by using the assumptions of magneto-hydrodynamic~(MHD) model
to derive the dimensionless form \cite{goedbloed_poedts_2004,biskamp1997nonlinear,degond2017asymptotic}. Let $x_0$ denote a typical length scale, $t_0$  a time scale, $v_0$   a velocity scale, $n_0$  a density scale, $T_0$  a temperature scale, $E_0$ and $B_0$  the electric field and magnetic field scales, respectively. Under the MHD assumptions, one has
$ex_0E_0=m_ev_0^{2}$,
$v_0B_0=E_0$, $v_0 \ll c$, and $v_{0}=v_{0,th}$, with $v_{0,th}$ being the thermal velocity, $m_e$ being the mass of electrons. Especially, one sets the dimensionless ratio $v_0/c$ to be equal to the dimensionless Debye length $\lambda =\lambda_D/x_0=\sqrt{m_e\epsilon_0 v_{0,th}^2/(e^2n_0)}/x_0$, that is, $v_0/c =\lambda$.
By consistently normalizing the physical variables and parameters based on the normalization constants given in Table \ref{tab: tab1}, the resultant dimensionless problem takes the form:
\begin{subequations}\label{eq:VlasovMaxwell}
	\begin{align}
    &\displaystyle \partial_{t}f_s+\bm v\cdot\nabla f_s+\frac{q_s}{m_s}(\bm E+\bm v\times \bm B)\cdot\nabla_{\bm v}f_s=0,\\
    &\displaystyle \lambda^{2}\partial_{t}\bm E-\nabla\times \bm B=- \bm J,\\
    &\displaystyle \partial_{t}\bm B+\nabla\times \bm E=0,\\
    &\displaystyle \lambda^{2}\nabla\cdot \bm E=\rho,\\
    &\displaystyle \nabla\cdot \bm B=0.
\end{align}
\end{subequations}
This system \eqref{eq:VlasovMaxwell} obeys the energy conservation law as described in Theorem \ref{thm1}.

 \begin{table}[H]
 	\centering
     \caption{Normalization of variables and simulation parameters} \label{tab: tab1}
     \begin{tabular}{c |c| c |c}
     \toprule[2pt]
       Variables & Normalization  & Variables  & Normalization \\
       or parameters & constant & or parameters & constant \\\midrule[1pt]
       $\bm x$   & $x_0$ & $t$ & $t_0$\\
       $\bm v$ & $v_0$ & $f_s$ & $n_0/v_0$\\
       $n_s$ & $n_0$ & $\bm J$ & $en_{0}v_{0}$\\
       $\bm E$ & $E_0$ & $\bm B$ & $B_0$\\\bottomrule[2pt]
     \end{tabular}
 \end{table}

\begin{theorem}
 	[Energy conservation law of the VM system.]
Under natural or periodic boundary conditions, the total energy of the VM system is preserved. Namely, the following relation holds,
\begin{equation}
     \frac{\text{d}}{\text{d}t}\left(\frac{1}{2}\sum_s\int_{\Omega_{\bm x}}\int_{\Omega_{\bm v}}f_s\bm v^2d\bm xd\bm v+\frac{\lambda^{2}}{2}\int_{\Omega_{\bm x}} |\bm E(\bm x)|^{2}d\bm x+\frac{1}{2}\int_{\Omega_{\bm x}}|\bm B(\bm x)|^{2}d\bm x\right)=0. \label{eq:theorem:VM}
\end{equation}
\label{thm1}
\end{theorem}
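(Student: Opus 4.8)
The plan is to differentiate the three energy contributions in \eqref{eq:theorem:VM} separately in time and to show that the electric-field work on the charges is the only surviving coupling term, appearing with opposite signs in the kinetic and electromagnetic balances so that it cancels.

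First I would treat the kinetic energy. Multiplying the Vlasov equation in \eqref{eq:VlasovMaxwell} by $\tfrac{1}{2}|\bm v|^2$ and integrating over $\Omega_{\bm x}\times\Omega_{\bm v}$, I would dispose of the two flux contributions by integration by parts. The spatial transport term vanishes because $\bm v$ is independent of $\bm x$, so that $\nabla_{\bm x}\cdot\bm v=0$ and the phase-space flux $\tfrac{1}{2}|\bm v|^2\bm v f_s$ integrates to zero on the boundary under periodic or natural (decay in $\bm v$) conditions. For the force term I would integrate by parts in $\bm v$ and invoke the two algebraic identities that encode the fact that the magnetic field does no work, namely $\nabla_{\bm v}\cdot(\bm v\times\bm B)=0$ and $\bm v\cdot(\bm v\times\bm B)=0$. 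What survives is $\nabla_{\bm v}(|\bm v|^2)\cdot\bm E=2\bm v\cdot\bm E$, so that after summing over species and using the definition of $\bm J$ in \eqref{eq:JS} one obtains
\[
\frac{\mathrm d}{\mathrm dt}\left(\frac{1}{2}\sum_s\int_{\Omega_{\bm x}}\int_{\Omega_{\bm v}}f_s|\bm v|^2\,d\bm x\,d\bm v\right)=\int_{\Omega_{\bm x}}\bm J\cdot\bm E\,d\bm x.
\]

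Next I would differentiate the field energy and substitute Ampère's law for $\lambda^2\partial_t\bm E$ and Faraday's law for $\partial_t\bm B$ from \eqref{eq:VlasovMaxwell}. This yields $\int_{\Omega_{\bm x}}\bigl[\bm E\cdot(\nabla\times\bm B)-\bm B\cdot(\nabla\times\bm E)\bigr]\,d\bm x-\int_{\Omega_{\bm x}}\bm E\cdot\bm J\,d\bm x$. I would rewrite the bracketed integrand as $-\nabla\cdot(\bm E\times\bm B)$, whose integral is the boundary flux of the Poynting vector and therefore vanishes under the stated boundary conditions. Hence the field-energy balance reads $\tfrac{\mathrm d}{\mathrm dt}(\text{field energy})=-\int_{\Omega_{\bm x}}\bm E\cdot\bm J\,d\bm x$, and adding it to the kinetic balance gives exactly \eqref{eq:theorem:VM}.

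The computation is largely routine; the steps that require care are the bookkeeping of the two boundary contributions, namely the phase-space flux in the kinetic step and the Poynting flux in the field step, both of which must be argued to vanish under periodic or natural boundary conditions, together with the two vector identities guaranteeing that the magnetic force contributes neither to the velocity divergence nor to the energy. The conceptual crux, and the main thing to get right, is the exact sign-reversed cancellation of the coupling integral $\int_{\Omega_{\bm x}}\bm J\cdot\bm E\,d\bm x$ between the two balances, which is precisely the mechanism exchanging kinetic and electromagnetic energy while keeping the total fixed.
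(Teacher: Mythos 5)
Your proposal is correct and is precisely the standard Poynting-type argument the authors implicitly rely on: the paper states Theorem~\ref{thm1} without proof, and your two balances (kinetic energy gaining $\int_{\Omega_{\bm x}}\bm J\cdot\bm E\,d\bm x$ via integration by parts in phase space, field energy losing the same term after substituting Amp\`ere's and Faraday's laws and discarding the Poynting flux) are exactly what is needed. One small bookkeeping caveat: with the force term $\frac{q_s}{m_s}(\bm E+\bm v\times\bm B)\cdot\nabla_{\bm v}f_s$, the species sum produces $\sum_s\frac{q_s}{m_s}\int f_s\bm v\,d\bm v$, which matches the definition of $\bm J$ in \eqref{eq:JS} only when the normalized masses are unity (as in the single-electron-species setting the paper actually uses), so the kinetic energy in \eqref{eq:theorem:VM} should strictly carry a factor $m_s$ per species for the cancellation to be exact in general.
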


For this dimensionless form, the system is said to  be in the quasi-neutral limit if one takes the Debye length $\lambda \rightarrow 0$. In this case, the Gauss law degenerates to $\rho(\bm x,t) =0$. By a combination of Faraday's law and  Amp\`ere's law, one obtains
\begin{equation}
\nabla \times \nabla \times\bm E = -\partial_t \bm J,
\end{equation}
which is not well-posed since the irrotational part of $\bm E$ is undetermined.
 In order to deal with the degeneracy of the quasi-neutral limit, Degond {\it et al.} \cite{degond2017asymptotic} proposed to combine the generalized Ohm's law, which is obtained by taking the first order momentum of the Vlasov equation with respect to the velocity variable and incorporating with the Maxwell system to obtain a reformulated system. This new system can degrade to the quasi-neutral limit system {\it without degeneracy} when $\lambda$ takes the zero limit. Here, we adopt this reformulated model and readers can refer to \cite{degond2017asymptotic} for more details. The reformulated system reads,
\begin{subequations}\label{eq:reformulatedMaxwell}
	\begin{align}
    &\displaystyle \partial_{t}f_s+\bm v\cdot\nabla f_s+\frac{q_s}{m_s}(\bm E+\bm v\times \bm B)\cdot\nabla_{\bm v}f_s=0,\\
    &\displaystyle \lambda^{2}\partial_{t}^2 \bm E +n_s\bm E +\nabla \times (\nabla \times \bm E) = \bm J\times \bm B-\nabla \cdot \mathcal{S}, \label{eq:gol}\\
    &\displaystyle \partial_{t}\bm B+\nabla\times \bm E=0,\\
    &\displaystyle \lambda^{2}\nabla\cdot \bm E=\rho,\\
    &\displaystyle \nabla\cdot \bm B=0.
\end{align}
\end{subequations}
The reformulated VM system \eqref{eq:reformulatedMaxwell} is equivalent to the original one \eqref{eq:VlasovMaxwell}, thus it also obeys the energy conservation law \eqref{eq:theorem:VM}. Nevertheless, once the density distribution function $f_s(\bm x,\bm v,t)$ is approximated by the macro particles, the Vlasov equation will be replaced by  Newton's equations of motion for these particles and the generalized Ohm's law becomes an redundant equation in the system of the particle motion. Thus, though the original particle-Maxwell system satisfies an energy conservation law, the reformulated particle-Maxwell system may violate the  conservation law. This will be discussed below.

\subsection{The reformulated particle-Maxwell system}
In the PIC method, the distribution function $f_s$ of species $s$ is discretized as a summation of $N_{s}$ computational (or macro) particles,
\begin{equation}\label{eq:pic}
    f_{s}(\bm x,\bm v,t)=\sum_{p=1}^{N_s}w_{p}S(\bm x-\bm x_{s, p}(t))\delta(\bm v-\bm v_{s,p}(t)),
\end{equation}
where $\delta$ is the Dirac-delta function, $S$ is the shape function of spatial variable $\bm x$, and $\bm x_{s,p}$ and $\bm v_{s,p}$ are the position and velocity
of the $p$th particle of species $s$, respectively. $w_{p}$ is the weight of the macro particle and hereafter it is assumed to be a constant.
In common PIC algorithms, the shape functions are usually assumed to be B-spline functions \cite{2017SMILEI,lapenta2015kinetic},
Gaussians, cut-off cosines or polynomials \cite{2006High,2019A}, which are symmetric and compactly supported, and satisfy the
property $\int_{\Omega_{\bm x}}S(\bm x)d\bm x=1$.

Under the macro-particle approximation \eqref{eq:pic}, if one takes the first order momentum of the Vlasov equation in Eq. \eqref{eq:reformulatedMaxwell}
with respect to the velocity field and uses the properties of the shape function, the following system of Newton's equations of motion for the
macro particles \cite{birdsall2018plasma,dawson1983particle,hockney2021computer} can be obtained,
\begin{subequations}
\label{eq:motion:total}
\begin{align}
    &\displaystyle \frac{d\bm v_{s,p}}{dt}=\frac{q_{s}m_e}{m_{s}}[\bm E(\bm x_{s,p})+\bm v_{s,p}\times \bm B(\bm x_{s,p})], \\
    &\displaystyle \frac{d \bm x_{s,p}}{d t}=\bm v_{s,p},
\end{align}
\end{subequations}
for $p=1, \cdots, N_s$, where $\bm E(\bm x_{s,p})$ and $\bm B(\bm x_{s,p})$ will be calculated by the interpolation of electromagnetic fields calculated from the Maxwell equations.
Here, for simplicity we assume that there is only a species of particles in the system $s=1$ and the ions are taken into account as background,
if no ambiguity is introduced. As a consequence, the original particle-Maxwell system takes the form:
\begin{subequations}
\label{eq:motionMaxwell}
\begin{align}
&\displaystyle \frac{d\bm v_{p}}{dt}=-\bm E(\bm x_{p})-\bm v_{p}\times \bm B(\bm x_{p}),\\
    &\displaystyle \frac{d \bm x_{p}}{d t}=\bm v_{p},\\
    &\displaystyle \lambda^{2}\partial_{t}\bm E-\nabla\times \bm B=- \bm J,\\
    &\displaystyle \partial_{t}\bm B+\nabla\times \bm E=0,\\
    &\displaystyle \lambda^{2}\nabla\cdot \bm E=1-n,\\
    &\displaystyle \nabla\cdot \bm B=0,
\end{align}
\end{subequations}
with $p=1,\cdots, N$ and $n$ being the density of electrons.
It is not difficult to prove that this standard particle-Maxwell model is energy-conserving, shown in Theorem \ref{thm2}.

\begin{theorem}[Energy conservation law of the particle-Maxwell system]
Under natural or periodic boundary conditions, the total energy of the particle-Maxwell system is preserved.
\begin{equation}
     \frac{\text{d}}{\text{d}t}\left(\sum\limits_{p=1}\limits^N\frac{1}{2}w_p\bm v_{p}^{2}+\frac{\lambda^{2}}{2}\int_{\Omega_{\bm x}} |\bm E(\bm x)|^{2}d\bm x
     +\frac{1}{2}\int_{\Omega_{\bm x}}|\bm B(\bm x)|^{2}d\bm x\right)=0.
     \label{eq:theorem:motionMaxwell}
\end{equation}
\label{thm2}
\end{theorem}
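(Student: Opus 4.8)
The plan is to differentiate the total energy in \eqref{eq:theorem:motionMaxwell} term by term and show that the power exchanged between the particles and the fields cancels exactly, which is the particle-level analogue of the Poynting theorem. First I would treat the kinetic part. Using $d\bm x_p/dt=\bm v_p$ together with the momentum equation,
\begin{equation}
\frac{d}{dt}\sum_{p=1}^N \frac{1}{2}w_p\bm v_p^2 = \sum_{p=1}^N w_p\,\bm v_p\cdot\frac{d\bm v_p}{dt} = -\sum_{p=1}^N w_p\,\bm v_p\cdot\bigl(\bm E(\bm x_p)+\bm v_p\times\bm B(\bm x_p)\bigr).
\end{equation}
The magnetic contribution vanishes pointwise because $\bm v_p\cdot(\bm v_p\times\bm B(\bm x_p))=0$, so the kinetic energy exchanges power only with the electric field, leaving $-\sum_p w_p\,\bm v_p\cdot\bm E(\bm x_p)$.

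Next I would handle the electromagnetic energy. Differentiating it and substituting Amp\`ere's law $\lambda^2\partial_t\bm E=\nabla\times\bm B-\bm J$ and Faraday's law $\partial_t\bm B=-\nabla\times\bm E$ gives
\begin{equation}
\frac{d}{dt}\left(\frac{\lambda^2}{2}\int_{\Omega_{\bm x}}|\bm E|^2 d\bm x+\frac{1}{2}\int_{\Omega_{\bm x}}|\bm B|^2 d\bm x\right)=\int_{\Omega_{\bm x}}\bigl(\bm E\cdot(\nabla\times\bm B)-\bm B\cdot(\nabla\times\bm E)\bigr)d\bm x-\int_{\Omega_{\bm x}}\bm E\cdot\bm J\,d\bm x.
\end{equation}
Applying the identity $\nabla\cdot(\bm E\times\bm B)=\bm B\cdot(\nabla\times\bm E)-\bm E\cdot(\nabla\times\bm B)$ and the divergence theorem, the first integral becomes a boundary flux of $\bm E\times\bm B$ that vanishes under natural or periodic boundary conditions, so the field energy changes only through $-\int_{\Omega_{\bm x}}\bm E\cdot\bm J\,d\bm x$. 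Note that the two remaining constraints, Gauss's law $\lambda^2\nabla\cdot\bm E=1-n$ and $\nabla\cdot\bm B=0$, are not needed for this balance.

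The crux is to show these two surviving terms are negatives of each other, and this is exactly where the particle--field coupling of the PIC map enters. From the macro-particle ansatz \eqref{eq:pic} and the definition \eqref{eq:JS}, the current is $\bm J(\bm x)=\sum_p q_p w_p\bm v_p\,S(\bm x-\bm x_p)$, so
\begin{equation}
\int_{\Omega_{\bm x}}\bm E\cdot\bm J\,d\bm x=\sum_{p=1}^N q_p w_p\,\bm v_p\cdot\int_{\Omega_{\bm x}}S(\bm x-\bm x_p)\bm E(\bm x)\,d\bm x.
\end{equation}
The key consistency requirement is that the field felt by particle $p$ be gathered with the same shape function used to scatter its current, i.e. $\bm E(\bm x_p)=\int_{\Omega_{\bm x}}S(\bm x-\bm x_p)\bm E(\bm x)\,d\bm x$; with this duality, and with $q_p=-1$ for the electrons consistent with $\lambda^2\nabla\cdot\bm E=1-n$, one obtains $\int_{\Omega_{\bm x}}\bm E\cdot\bm J\,d\bm x=-\sum_p w_p\,\bm v_p\cdot\bm E(\bm x_p)$. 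Substituting this into the field-energy rate cancels the kinetic-energy rate and the total time derivative vanishes. The main obstacle is precisely establishing this last duality: it is not automatic but relies on using one and the same symmetric, normalized shape function $S$ for both current deposition and force interpolation, so that the work done by $\bm E$ on the particles equals the power $\int_{\Omega_{\bm x}}\bm E\cdot\bm J\,d\bm x$ drained from the fields.
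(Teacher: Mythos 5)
Your argument is correct, and it is the standard continuous-time energy balance that the paper leaves implicit (the text only asserts ``it is not difficult to prove'' and gives no proof of Theorem~\ref{thm2}): the magnetic force does no work, the Poynting flux vanishes under natural or periodic boundary conditions, and the Joule term $\int_{\Omega_{\bm x}}\bm E\cdot\bm J\,d\bm x$ cancels the particle work term. You are also right to single out the gather/scatter duality $\bm E(\bm x_p)=\int_{\Omega_{\bm x}}S(\bm x-\bm x_p)\bm E(\bm x)\,d\bm x$ as the one non-automatic ingredient; this is exactly the convention the paper adopts (the same shape function $S$ from \eqref{eq:pic} is used both to deposit $\bm J$ via \eqref{eq:JS} and to interpolate the fields to the particles), so no gap remains.
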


However, as mentioned in the previous section, the above system is not well-posed at the quasi-neutral limit of $\lambda\rightarrow 0.$
By taking the first-order momentum of the Vlasov equation in \eqref{eq:reformulatedMaxwell} with respect to the velocity field
and using the definitions in \eqref{eq:JS}, one arrives at the generalized Ohm's law,
\begin{equation}\label{eq:ghm}
    \partial_t \bm J = \nabla \cdot \mathcal{S} +n\bm E - \bm J\times  \bm B
    \end{equation}
Combing the Faraday's law and the Amp\`ere's law together with equation \eqref{eq:motion:total}, one finally arrives at the reformulated particle-Maxwell system:
\begin{subequations}
\label{eq:motion_reformulatedMaxwell}
\begin{align}
    &\displaystyle \frac{d\bm v_{p}}{dt}=-\bm E(\bm x_{p})-\bm v_p\times \bm B(\bm x_p),\\
    &\displaystyle \frac{d \bm x_{p}}{d t}=\bm v_{p},  ~~~~~~~~ p=1,\cdots,N,\\[5pt]
    &\displaystyle \lambda^{2}\partial_{t}^2 \bm E +n\bm E +\nabla \times (\nabla \times \bm E) = \bm J\times \bm B-\nabla \cdot \mathcal{S},\\
    &\displaystyle \partial_{t}\bm B+\nabla\times \bm E=0,\\
    &\displaystyle \lambda^{2}\nabla\cdot \bm E=1-n,\\
    &\displaystyle \nabla\cdot \bm B=0.
\end{align}
\end{subequations}
Once one approximates the Vlasov equation by Newton's equations of motion, the resulting reformulated particle-Maxwell system is not equivalent to
the original particle-Maxwell system any more, and the energy conservation for this reformulated system cannot be simply proved anymore.

\begin{remark}One can conjecture the existence of the energy conservation law under these boundary conditions for finite $\lambda$ due to the equivalence
between the reformulated Vlasov-Maxwell system and the standard Vlasov-Maxwell system.
Under natural or periodic boundary conditions, it can be shown that the total energy of the reformulated particle-Maxwell system satisfies the following balance equation,
\begin{align*}
   \frac{d^{2}}{dt^{2}}\left[\sum\limits_{p=1}\limits^{N_{p}}\frac{1}{2}w_{p}{\bm v}_{p}^{2}+\frac{\lambda^{2}}{2} \int_{\Omega_{\bm x}}|{\bm E}|^{2}d\bm x\right]
   =\int_{\Omega_{\bm x}}\bm J\cdot\partial_{t}{\bm E}d{\bm x}+\lambda^{2}\int_{\Omega_{\bm x}}\left( \partial_{t}{\bm E}\right)^2 d\bm x-\int_{\Omega_{\bm x}}\left( \partial_{t}\bm B\right)^2 d\bm x.
\end{align*}
A rigorous proof of energy conservation remains an open problem.
\end{remark}

It is then reasonable to make attempt to develop a numerical scheme started from the reformulated particle-Maxwell system
and make it preserve the Gauss law and total energy at the same time, achieving both the asymptotic preserving and energy conserving. In our scheme,
the violation of Gauss law will be corrected by the Boris correction. The energy conservation will be satisfied through the Lagrange multiplier method
which is illustrated in following section.

\section{AP and APEC schemes}\label{sec:APEC:VP}

Before introducing the numerical scheme, we should emphasize that the reformulated Maxwell system integrates the generalized Ohm's law, the Faraday equation
and the Amp\`ere equation to derive a new equation Eq.\eqref{eq:gol} which avoids the degeneracy of the quasi-neutral limit  \cite{degond2017asymptotic}.
In this section, we design the APEC algorithm from the classical Maxwell equations and the generalized Ohm's law.
Then we will prove that this numerical algorithm is the discretization of the reformulated Vlasov Maxwell system \eqref{eq:motion_reformulatedMaxwell}.

\subsection{The AP scheme}
\label{sec:AP}
We briefly review the first-order AP algorithm developed in Ref. \cite{degond2017asymptotic} for the reformulated VM system and
recast it into a semi-Lagrangian operator-splitting framework such that higher-order schemes can be developed straightforwardly.

We start with the generalized Ohm's law \eqref{eq:ghm} and use the operator splitting technique to split the above equation into the following two subproblems:
     \begin{subequations}\label{eq: ohmdis}
     \begin{align}
 &    \partial_t \bm J=\nabla \cdot \mathcal{S} - \bm J\times  \bm B,  \label{eq: parupd}\\
 &  \partial_t \bm J=n\bm E.\label{eq: implJ}
 \end{align}
      \end{subequations}
By the definitions of $\bm J$ and $\mathcal{S}$ in Eq. \eqref{eq:JS}, using macro particles to approximate $f(\bm x,\bm v,t)$,
and taking integral of Eq. \eqref{eq: parupd} over the spatial domain,
one arrives at a sequence of ODEs for the macro particles
 \begin{subequations}
 \label{eq:xpstar}
 \begin{align}
 &\displaystyle \frac{d \bm x_p}{dt}=\bm v_p,\\
 &\displaystyle \frac{d \bm v_p}{dt}=-\bm v_p \times \bm B(\bm x_p).
 \end{align}
 \end{subequations}
Denote $\bm X=(\bm x_1,\cdots, \bm x_N)^T$ and $\bm V=(\bm v_1,\cdots, \bm v_N)^T$. Given the values of the $m$th time step,
one can discretize Eqs. \eqref{eq:xpstar} and Eq. \eqref{eq: implJ} by a first-order scheme as follows:
 \begin{subequations}
 \label{eq: xpnstar}
\begin{align}
        &\displaystyle \frac{{\bm x}_{p}^{m+1,\ast}-{\bm x}_{p}^{m}}{\Delta t}={\bm v}_{p}^{m+1,\ast},\\
        &\displaystyle \frac{{\bm v}_{p}^{m+1,\ast}-{\bm v}_{p}^{m}}{\Delta t}=-{\bm v}_{p}^{m}\times {\bm B}^{m}(\bm x_{p}^{m}),
\end{align}
\end{subequations}
and
\begin{equation}\label{eq: jtilde}
\frac{\widetilde {\bm J}^{m+1}- \bm J^{m+1,\ast}}{\Delta t}=n^m\widetilde {\bm E}^{m+1},
\end{equation}
where $n^m=n(\bm X^m)$ and ${\bm J}^{m+1,\ast} = \bm J({\bm X}^{m+1,\ast},\bm V^{m+1,\ast})$ are defined through Eq. \eqref{eq:JS}.

The field quantities $\bm E$ and $\bm B$ are obtained by discretizing  Amp\`ere's and Faraday's laws in a semi-implicit manner:
\begin{subequations}
\label{eq:ap:ampere}
\begin{align}
        &\displaystyle \lambda^{2}\frac{\widetilde{\bm E}^{m+1}-\bm E^{m}}{\Delta t}-\nabla\times \bm B^{m+1}=-\widetilde{\bm J}^{m+1},\\
        &\displaystyle \frac{\bm B^{m+1}-\bm B^{m}}{\Delta t}+\nabla \times\widetilde{ \bm E}^{m+1}=0,
\end{align}
\end{subequations}
which can be reformulated as the following coupled system,
    \begin{equation} \label{eq:predictor}
        \begin{aligned}
            \begin{pmatrix}
            {\lambda^{2}}/{\Delta t}+{\Delta t}n^m & -\nabla\times\\
          \nabla\times & {1}/{\Delta t}
            \end{pmatrix}\begin{pmatrix}
            \widetilde{\bm E}^{m+1}\\
           \bm B^{m+1}
            \end{pmatrix}=\begin{pmatrix}
            {\lambda^{2}}/{\Delta t} & 0\\
            0 & {1}/{\Delta t}
            \end{pmatrix}\begin{pmatrix}
          \bm  E^{m}\\
            \bm B^{m}
            \end{pmatrix}
            +\begin{pmatrix}
            -\bm J^{m+1,\ast}\\
            0
            \end{pmatrix}.
        \end{aligned}
    \end{equation}
This is a linear system, to be solved in each time step to obtain $\widetilde{\bm E}^{m+1}$ and $\bm B^{m+1}$.

In the second step, one updates the electric field $\bm E^{m+1}$ via the Boris correction \cite{boris1970relativistic} for preserving  Gauss's law.
One introduces an auxiliary variable $P$ such that
    \begin{equation}\label{eq: Boris}
    \bm E^{m+1}=\widetilde{\bm E}^{m+1}-\nabla P.
    \end{equation}
By the charge conservation law, one has
$ \partial_t n =\nabla \cdot \bm J,$
 which can be discretized as
 \begin{equation}
 \frac{{n}^{m+1}-n^m}{\Delta t}=\nabla \cdot  {\bm J}^{m+1}.
 \end{equation}
 Here, ${\bm J}^{m+1}$ is obtained by discretizing Eq. \eqref{eq: implJ} by
 \begin{equation}\label{eq:jm1}
 \frac{ {\bm J}^{m+1}- \bm J^{m+1,\ast}}{\Delta t}=n^m {\bm E}^{m+1},
 \end{equation}
where ${\bm J}^{m+1,\ast}$ can be obtained upon ${\bm X}^{m+1,\ast}$ and $\bm V^{m+1,\ast}$ once they are  given.
Note that the form of ${\bm J}^{m+1}$ is different from $\widetilde {\bm J}^{m+1}$ in Eq. \eqref{eq: jtilde}, which is crucial for the numerical algorithm to remain non-degenerate when the Debye length $\lambda$ tends to 0. Inserting Eq. \eqref{eq: Boris} into  Gauss's law $\lambda^2 \nabla \cdot \bm E^{m+1} = 1- n^{m+1},$
one arrives at a Poisson equation with variable coefficients for variable $P$:
\begin{equation*}
 -\nabla\cdot\left[\left(\frac{\lambda^{2}}{\Delta t^{2}}+n^m\right)\nabla P\right]=\frac{1-n^m}{\Delta t^{2}}-\nabla \cdot \left[\left(\frac{\lambda^{2}}{\Delta t^{2}}
 + n^m\right)\widetilde{\bm E}^{m+1} \right] -\frac{1}{\Delta t}\nabla \cdot {\bm J}^{m+1,\ast}.
\end{equation*}
Taking divergence on both sides of Eq. \eqref{eq:ap:ampere} and inserting the resultant into the above equation, one finally obtains
    \begin{equation}\label{eq: psolve}
        -\nabla\cdot\left[\left(\frac{\lambda^{2}}{\Delta t^{2}}+n^m\right)\nabla P\right]=\frac{1-n^m}{\Delta t^{2}}-\frac{\lambda^{2}}{\Delta t^{2}}\nabla\cdot \bm E^{m}.
    \end{equation}
The correction $P$ is then determined by solving this equation.

Finally, one advances the Newton's equations of motion to obtain the positions and velocities of macro particles at the $t^{m+1}$ by using the following scheme,
    \begin{subequations}
\begin{align}
    &\displaystyle \frac{\bm v_{p}^{m+1}-\bm v_{p}^{m}}{\Delta t}=-  \bm E^{m+1}(\bm x_{p}^{m})-\frac{\bm v^{m+1}_p +\bm v^{m}_p}{2}\times \bm B^{m}(\bm x_{p}^{m}), \\
     &\displaystyle \frac{\bm x_{p}^{m+1}-\bm x_{p}^{m}}{\Delta t}=\bm v_{p}^{m+1},
\end{align}
\end{subequations}
for $p=1, \cdots, N$, where the fields at the particle positions are obtained by the interpolation using the shape function.

\subsection{The APEC scheme with Lagrange multiplier}
\label{sec:AP-EC}
We now proceed to construct an APEC algorithm based on the above AP scheme. The main idea for achieving the energy conservation is the decomposion of the electric force into two contributions and introducing a Lagrange multiplier to adjust the kinetic energy in the current step so that the total energy is conserved. The corrected velocities in the current step will then influence the computation of the electromagnetic fields.

The first step for the APEC scheme resembles that of the AP scheme in the previous section, i.e.  $\widetilde {\bm J}^{m+1}$ is computed using Eqs. \eqref{eq: xpnstar}-\eqref{eq: jtilde}.

We then solve Eq. \eqref{eq:predictor}. To be convenient, we split the electric and magnetic fields into two contributions,  $\widetilde {\bm E}^{m+1}=\widetilde {\bm E}^{m+1}_1+\widetilde {\bm E}_2^{m+1}$ and $ {\bm B }^{m+1}= {\bm B}^{m+1}_1+ {\bm B}_2^{m+1}$, and they are governed by:
  \begin{equation}
        \begin{aligned}
            \begin{pmatrix}
            {\lambda^{2}}/{\Delta t}+{\Delta t}n^m & -\nabla\times\\
          \nabla\times & {1}/{\Delta t}
            \end{pmatrix}\begin{pmatrix}
            \widetilde{\bm E}_1^{m+1}\\
           \bm B^{m+1}_1
            \end{pmatrix}=\begin{pmatrix}
            {\lambda^{2}}/{\Delta t} & 0\\
            0 & {1}/{\Delta t}
            \end{pmatrix}\begin{pmatrix}
          \bm  E^{m}\\
            \bm B^{m}
            \end{pmatrix},
        \end{aligned}
        \label{eq:APEC:field1:version2}
    \end{equation}
    and
      \begin{equation}
        \begin{aligned}
            \begin{pmatrix}
            {\lambda^{2}}/{\Delta t}+{\Delta t}n^m & -\nabla\times\\
          \nabla\times & {1}/{\Delta t}
            \end{pmatrix}\begin{pmatrix}
            \widetilde{\bm E}_2^{m+1}\\
           \bm B^{m+1}_2
            \end{pmatrix}=
            \begin{pmatrix}
            -\bm J^{m+1,\ast}\\
            0
            \end{pmatrix}.
        \end{aligned}
        \label{eq:APEC:field2:version2}
    \end{equation}
After obtaining $\{\widetilde {\bm E}^{m+1}_1,\widetilde {\bm E}^{m+1}_2  \}$ and $\{ {\bm B}^{m+1}_1,{\bm B}^{m+1}_2 \}$,
one uses Boris correction Eq. \eqref{eq: psolve} to update the electric field. Here, the main difference from the previous AP scheme is
 that we have two parts of the corrected electric field $\bm E^{m+1}$,
 \begin{align*}
    \bm E^{m+1}_1 = \widetilde{\bm E}_1^{m+1}   ,~\hbox{and}~
    \bm E^{m+1}_2 = \widetilde{\bm E}_2^{m+1}-\nabla P.
\end{align*}

The position and velocity of each macro particle is also split into two contributions according to the forces induced
by $\{\widetilde {\bm E}^{m+1}_1,\widetilde {\bm E}^{m+1}_2  \}$ and $\{ {\bm B}^{m+1}_1,{\bm B}^{m+1}_2 \}$.
The discretization schemes become,
 \begin{subequations}
 \label{eq:apec:particle1}
\begin{align}
    &\displaystyle \frac{\bm v_{p,1}^{m+1}-\bm v_{p}^{m}}{\Delta t}=- \bm E^{m+1}_1(\bm x_{p}^{m})-\frac{\bm v^{m+1}_{p,1} +\bm v^{m}_{p}}{2}\times \bm B^{m}(\bm x_{p}^{m}),\\
        &\displaystyle \frac{\bm x_{p,1}^{m+1}-\bm x_{p}^{m}}{\Delta t}=\bm v_{p,1}^{m+1},
\end{align}
\end{subequations}
and
  \begin{subequations}
  \label{eq:apec:particle1_2}
\begin{align}
    &\displaystyle \frac{\bm v_{p,2}^{m+1}-\bm 0}{\Delta t}=- \bm E^{m+1}_2(\bm x_{p}^{m})-\frac{\bm v^{m+1}_{p,2} }{2}\times \bm B^{m}(\bm x_p^{m}),\\
        &\displaystyle \frac{\bm x_{p,2}^{m+1}-\bm 0}{\Delta t}=\bm v_{p,2}^{m+1}.
\end{align}
\end{subequations}

Finally, one corrects the velocities of the macro particles by introducing a Lagrange multiplier. In each step, we determine a scalar constant $\xi^{m+1}$ to
correct the velocities of all particles
\begin{equation}\label{eq: vcorr}
\bm v_{p}^{m+1} = \bm v_{p,1}^{m+1}+\xi^{m+1} \bm v_{p,2}^{m+1},
\end{equation}
for $p=1,\cdots,N$, such that the total energy is conserved. By the energy conservation law, $\xi^{m+1}$ satisfies,
\begin{equation} \label{eq:ec}
\begin{aligned}
&  \int (\lambda^2 |\bm E^{m+1}|^2+|\bm B^{m+1}|^2 ) +  \sum_p w_p\Big(  |\bm v_{p,1}^{m+1}|^2 +2 \bm v_{p,1}^{m+1}\cdot \bm v_{p,2}^{m+1}\xi^{m+1}+ |\bm v_{p,2}^{m+1}|^2 (\xi^{m+1})^2   \Big) \\
 &=  \lambda^2  \int |\bm E^{0}|^2 +  \int |\bm B^{0}|^2+  \sum_p w_p  |\bm v_{p}^{0}|^2:=2W_0,
\end{aligned}
\end{equation}
with $W_0$ being the initial total energy. This is a quadratic equation for the scalar value $\xi^{m+1}$, and the exact solution close to 1 is adopted in our
calculations. One then updates the velocity field by equation \eqref{eq: vcorr} and the electromagnetic fields and the positions of the macro particles by
$\bm E^{m+1}=\bm E^{m+1}_1 +\bm E^{m+1}_2, \bm B^{m+1} = \bm B_1^{m+1}+\bm B_2^{m+1}$ and $\bm x_p^{m+1} = \bm x_{p,1}^{m+1} +\bm x_{p,2}^{m+1}.$
Note that the  fields and  particle positions remain unchanged in order to preserve  Gauss's law.

\begin{remark} The Boris correction is equivalent to a Lagrange multiplier method to enforce Gauss's law \cite{barthelme2007generalized}, where the correction field $P$ is the multiplier.	Here in our APEC scheme, the Lagrange multiplier $\xi$ to enforce the energy conservation is a scalar constant. It satisfies a quadratic equation and can be easily determined by an explicit formula. It is noted that the Gauss law is preserved during the correction of the total energy. It is essential in order that the PIC scheme can work accurately.
\end{remark}

The steps of the APEC PIC scheme are summarized in Algorithm \ref{alg:APEC}.

\begin{breakablealgorithm}
\label{alg:APEC}
\caption{\sf APEC PIC algorithm for Vlasov-Maxwell equations}
\begin{algorithmic}[1]
\vspace{0.5ex}
\State \quad  Initialization: Given $\Delta t$, $\Delta \bm x$ and total time $T$, assign velocities and positions of all particles and magnetic field.
Calculate initial charge densities through the particle-to-grid assignment. Solve the Poisson equation to obtain
Gauss-Law-satisfying initial electric field.
\State \mbox{\textbf{While}} $t^{m+1} < T$
\State \quad Calculate $\bm J^{m+1,\ast}$ by the particle-to-grid assignment, and then solve the two splitting parts of $\widetilde{\bm E}^{m+1}$
and $\bm B^{m+1}$ by Eqs. \eqref{eq:APEC:field1:version2} and \eqref{eq:APEC:field2:version2}.
\State \quad Calculate $P$ with the Boris correction by Eq. \eqref{eq: psolve},
and then obtain the two parts, $\bm E_1^{m+1} = \widetilde{\bm E}_1^{m+1}$ and $\bm E_2^{m+1} = \widetilde{\bm E}_2^{m+1} -\nabla P$.
 \State \quad Interpolate the electromagnetic fields on macro particles to get $\bm E^{m+1}_1(\bm x_p^m)$, $\bm E^{m+1}_2(\bm x_p^m)$, $\bm B^{m+1}_1(\bm x_p^m)$ and $\bm B^{m+1}_2(\bm x_p^m)$.
 \State \quad Evolve velocities and positions of particles by Eqs. \eqref{eq:apec:particle1} and \eqref{eq:apec:particle1_2}.
\State \quad Sum up two contributions of the electromagnetic fields and particle positions. Update the velocities of macro particles by,
$\bm v_{p}^{m+1} = \bm v_{p,1}^{m+1}+\xi^{m+1} \bm v_{p,2}^{m+1},$ where parameter $\xi^{m+1}$ is determined by solving Eq. \eqref{eq:ec}.
\State \quad Calculate the charge densities $1-n^{m+1}$ through the particle-to-grid assignment.
\State \quad $m = m+1$.
\State  \mbox{\textbf{End While}}
\end{algorithmic}
\end{breakablealgorithm}

The AP particle scheme in Section \ref{sec:AP} is consistent with the reformulated particle-Maxwell system in both the non-neutral and the quasi-neutral regimes \cite{degond2017asymptotic}.
Similarly, our APEC algorithm also starts from the particle-Maxwell system and the current is calculated from the generalized Ohm's law.
Theorem \ref{thm:ap} shows that the APEC scheme is consistent with the
reformulated particle-Maxwell system with the Boris correction for both  finite $\lambda$ and at the quasi-neutral limit $\lambda\rightarrow 0$.

\begin{theorem}\label{thm:ap}
	The APEC scheme developed in Section \ref{sec:AP-EC} is consistent with the reformulated particle-Maxwell system with the Boris correction:
	\begin{subequations}\label{eq:motion_reformulatedMaxwell_boris}
		\begin{align}
		&\displaystyle \frac{d\bm v_{p}}{dt}=-\bm E(\bm x_p)-\bm v_p \times\bm B(x_p),\\
		&\displaystyle \frac{d\bm  x_{p}}{d t}=\bm v_{p},\\
		&\displaystyle \lambda^{2}\partial_{t}^2 \widetilde{\bm E} +n\widetilde{\bm E} +\nabla \times (\nabla \times \widetilde{\bm E}) = \bm J\times \bm B-\nabla \cdot \mathcal{S}, \label{eq:rmm_c}\\
		&\displaystyle \partial_{t}\bm B+\nabla\times \bm E=0,\\
		&\displaystyle -\lambda^2 \partial_t^2 \Delta P - \nabla \cdot (n\nabla  P) = -\lambda^2 \partial_t^2 \nabla \cdot \widetilde{\bm E} -\nabla^2:\mathcal{S}-\nabla \cdot (n\widetilde{\bm E})+\nabla \cdot (\bm J\times \bm B),\label{eq:rmm_e}\\
		&\displaystyle \nabla\cdot \bm B=0,\\
		&\displaystyle  \bm E = \widetilde{\bm E}-\nabla P.
		\end{align}
	\end{subequations}
with $p=1,\ldots,N$,   for both  finite $\lambda$ and  the quasi-neutral limit $\lambda\rightarrow 0$.
\end{theorem}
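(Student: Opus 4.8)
The plan is to establish consistency equation by equation: for each discrete update of the APEC scheme I would Taylor-expand about a smooth solution (equivalently, manipulate the update relations algebraically and let $\Delta t\to 0$) and match the leading-order residual against the corresponding relation in \eqref{eq:motion_reformulatedMaxwell_boris}, verifying that the truncation error is $O(\Delta t)$. The first step is to remove the split (subscript-$1$/subscript-$2$) structure: because \eqref{eq:predictor} is linear, the sums $\widetilde{\bm E}^{m+1}=\widetilde{\bm E}_1^{m+1}+\widetilde{\bm E}_2^{m+1}$ and $\bm B^{m+1}=\bm B_1^{m+1}+\bm B_2^{m+1}$ defined through \eqref{eq:APEC:field1:version2}--\eqref{eq:APEC:field2:version2} satisfy exactly the unsplit semi-implicit system \eqref{eq:predictor}, and likewise the particle updates \eqref{eq:apec:particle1}--\eqref{eq:apec:particle1_2} recombine into the single AP push. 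Thus the field and particle consistency reduces to that of the AP scheme of Section \ref{sec:AP}, and the only genuinely new ingredient is the velocity correction \eqref{eq: vcorr}. For this I must show that the Lagrange multiplier obeys $\xi^{m+1}=1+O(\Delta t)$, so that the corrected push still converges to Newton's equation; this follows from \eqref{eq:ec} together with the fact that $\bm v_{p,2}^{m+1}=O(\Delta t)$ while $\bm v_{p,1}^{m+1}$ carries the full $O(1)$ motion, which forces the energy-balance quadratic to have a root near unity.

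For the field equation \eqref{eq:rmm_c} I would eliminate $\bm B^{m+1}$ by inserting the discrete Faraday law \eqref{eq:ap:ampere} into the discrete Ampère law, and then replace $\widetilde{\bm J}^{m+1}$ using the two-part current relation: the predictor \eqref{eq: xpnstar} supplies $\bm J^{m+1,\ast}$ as a first-order discretization of $\partial_t\bm J=\nabla\cdot\mathcal{S}-\bm J\times\bm B$ (equation \eqref{eq: parupd}), while \eqref{eq: jtilde} adds the implicit contribution $n^m\widetilde{\bm E}^{m+1}$ from \eqref{eq: implJ}, so together they assemble the discrete generalized Ohm's law \eqref{eq:ghm}. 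Substituting and Taylor-expanding, the nested first-order differences of $\bm E$, $\bm B$ and $\bm J$ produce the second-order term $\lambda^2\partial_t^2\widetilde{\bm E}$, the curl-curl term inherited from Faraday, and the mass term $n\widetilde{\bm E}$, recovering \eqref{eq:rmm_c} to $O(\Delta t)$. The Faraday update \eqref{eq:ap:ampere} is itself directly consistent with $\partial_t\bm B+\nabla\times\bm E=0$, and taking its discrete divergence shows $\nabla\cdot\bm B^{m+1}=\nabla\cdot\bm B^m$, so $\nabla\cdot\bm B=0$ is preserved exactly. The particle equations combine to a consistent discretization of Newton's law, with the centered average $\tfrac{1}{2}(\bm v_p^{m+1}+\bm v_p^m)$ giving the standard Boris rotation of the magnetic force.

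For the correction equation \eqref{eq:rmm_e} I would start from the Poisson problem \eqref{eq: psolve} and first observe that its right-hand side equals $\Delta t^{-2}$ times the Gauss-law residual $(1-n^m)-\lambda^2\nabla\cdot\bm E^m$ of the previous step, so that $P$ acts as an elliptic projection restoring Gauss's law on $\bm E^{m+1}=\widetilde{\bm E}^{m+1}-\nabla P$. The key identity is obtained by taking the divergence of \eqref{eq:rmm_c}: since $\nabla\cdot(\nabla\times\nabla\times\widetilde{\bm E})=0$ and $\nabla\cdot(\nabla\cdot\mathcal{S})=\nabla^2\!:\!\mathcal{S}$, the curl-curl term drops out and the remaining terms reproduce exactly the right-hand side of \eqref{eq:rmm_e}; combined with the relation $\bm E=\widetilde{\bm E}-\nabla P$ this shows that \eqref{eq:rmm_e} is the consistent continuous counterpart of the discrete projection \eqref{eq: psolve}, the coefficient $\lambda^2/\Delta t^2+n^m$ matching the operator $-\lambda^2\partial_t^2\Delta-\nabla\cdot(n\nabla)$ once the discrete second time difference is identified.

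Finally I would verify the quasi-neutral limit. Setting $\lambda=0$, the matrix in \eqref{eq:predictor} retains the $\Delta t\,n^m$ entry and the elliptic operator in \eqref{eq: psolve} retains the coefficient $n^m>0$, so both the field solve and the projection stay invertible and coercive; it is exactly the distinction between $\bm J^{m+1}$ in \eqref{eq:jm1} and $\widetilde{\bm J}^{m+1}$ in \eqref{eq: jtilde} that keeps this limit non-degenerate. The hard part of the argument is the second and third steps: because the scheme advances in a single step from level $m$ to $m+1$, recovering the second-order-in-time equations \eqref{eq:rmm_c} and \eqref{eq:rmm_e} requires carefully tracking how the sequential substitutions (Ohm splitting, then Ampère, then Faraday) assemble the nested time differences through the implicit coupling, and then proving that all truncation errors, including the $\xi^{m+1}-1$ contribution from the energy correction, are genuinely $O(\Delta t)$ uniformly as $\lambda\to 0$.
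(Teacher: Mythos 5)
Your proposal follows essentially the same route as the paper's proof: recombine the split fields by linearity of \eqref{eq:predictor}, use the identity $\bm J^{m+1,\ast}=\bm J^m+\Delta t\,\nabla\cdot\mathcal{S}^m-\Delta t\,\bm J^m\times\bm B^m$ together with elimination of $\bm B^{m+1}$ via the discrete Faraday law to assemble the nested time differences yielding \eqref{eq:rmm_c}, manipulate the Boris correction with the step-$m$ Gauss and Amp\`ere relations to recover \eqref{eq:rmm_e}, and dispose of the energy correction by noting $\bm v_{p,2}^{m+1}=O(\Delta t)$. The only (minor) difference is that you insist on $\xi^{m+1}=1+O(\Delta t)$, which is in fact the sharper condition actually needed for first-order consistency of the velocity push, whereas the paper argues somewhat more loosely that any finite $\xi^{m+1}$ suffices because $\bm v_{p,2}^{m+1}$ vanishes with $\Delta t$.
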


\begin{proof}
Since Eqs. \eqref{eq:motion_reformulatedMaxwell_boris}(df) are the same as those in the standard particle-Maxwell system,
we only need to prove that the APEC algorithm is the discretization of Eqs. \eqref{eq:motion_reformulatedMaxwell_boris}(abce).
The current $\bm J^{m+1,\ast}$ can be equivalently written as
\begin{equation} \label{eq:current}
\bm J^{m+1,\ast} = \bm J^m+\Delta t \nabla \cdot \mathcal{S}^m -\Delta t \bm J^m \times \bm B^m,
\end{equation}
which was shown by Degond {et al.} \cite{degond2017asymptotic} (the moment form therein).
This expression will be used below.
		
In the APEC scheme, the electromagnetic fields are the summation of the two contributions,
$\widetilde{\bm E}^{m+1} = \widetilde{\bm E}_1^{m+1}+\widetilde{\bm E}_2^{m+1}$, and $\bm B^{m+1}=\bm B_1^{m+1}+\bm B_2^{m+1}$.
The electromagnetic fields from Eqs. \eqref{eq:APEC:field1:version2} and \eqref{eq:APEC:field2:version2} are equal to
those solutions from Eq. \eqref{eq:ap:ampere}.
By Eq. \eqref{eq:ap:ampere}(b), one has $\bm B^{m+1} = \bm B^m -\Delta t \nabla \times \widetilde{\bm E}^{m+1}$.
Substituting this expression into the Amp\`ere equation and using Eqs. \eqref{eq: jtilde} and \eqref{eq:current},  one obtains,
	\begin{align}
	\frac{\lambda^2( \widetilde{\bm E}^{m+1} -\bm E^m)}{\Delta t^2} = \frac{\nabla \times \bm B^m - \bm J^m}{\Delta t} - \nabla \times \nabla \times \widetilde{\bm E}^{m+1}
	-n^m \widetilde{\bm E}^{m+1} -\nabla \cdot \mathcal{S}^m +\bm J^m \times \bm B^m.
	\label{eq:proposition:field}
	\end{align}
Since the Faraday equation has the following discretization,
	\begin{align}
	\frac{\lambda^2 (\bm E^m - \bm E^{m-1})}{\Delta t} = \nabla \times \bm B^m -\bm J^m,
	\label{eq:proposition:ampere}
	\end{align}
Eq. \eqref{eq:proposition:field} can be further expressed as,
	\begin{align}
	\frac{\lambda^2( \widetilde{\bm E}^{m+1} -2\bm E^m +\bm E^{m-1})}{\Delta t^2} + \nabla \times \nabla \times \widetilde{\bm E}^{m+1} +n^m\widetilde{\bm E}^{m+1} +\nabla \cdot \mathcal{S}^m -\bm J^m\times \bm B^m =0.
	\end{align}
Apparently, this equation is the discretization of Eq. \eqref{eq:rmm_c} which is the reformulated Amp\`ere law.

The Boris correction $P$ satisfies,
	\begin{align}
	-\nabla \cdot \left[\left(\frac{\lambda^2}{\Delta t^2} +n^m\right)\nabla P\right] = \frac{1-n^m}{\Delta t^2} -\nabla \cdot \left[\left(\frac{\lambda^2}{\Delta t^2}+n^m\right) \widetilde{\bm E}^{m+1}\right] -\frac{1}{\Delta t} \nabla \cdot \bm J^{m+1,\ast}.
	\end{align}
If one inserts Eq.\eqref{eq:current} into this elliptic equation, one derives,
	\begin{align}
	-\nabla \cdot \left[\left(\frac{\lambda^2}{\Delta t^2} +n^m\right)\nabla P\right] =& \frac{1-n^m-\lambda^2 \nabla \cdot \widetilde{\bm E}^{m+1}}{\Delta t^2} -\nabla \cdot \left(n^m\widetilde{\bm E}^{m+1}\right)
	\nonumber \\
	&-\frac{1}{\Delta t}\nabla \cdot \bm J^m -\nabla^2:\mathcal{S}^m +\nabla \cdot \left(\bm J^m\times \bm B^m\right).
	\label{eq:proposition:correction}
	\end{align}
The corrected electric field at the $m$th step satisfies the Gauss law and Amp\`ere equation, i.e.,
	\begin{align*}
	1-n^m &= \lambda^2 \nabla \cdot \bm E^m,\\
	-\frac{1}{\Delta t}\nabla \cdot \bm J^m &= \frac{\lambda^2}{\Delta t^2}\nabla \cdot (\bm E^m -\bm E^{m-1}).
	\end{align*}
With these two expressions, Eq.\eqref{eq:proposition:correction} can be rearranged as
	\begin{align*}
	-\nabla \cdot \left[\left(\frac{\lambda^2}{\Delta t^2} +n^m\right)\nabla P\right] =& - \frac{\lambda^2}{\Delta t^2}\left(\nabla \cdot \widetilde{\bm E}^{m+1} -2\nabla \cdot \bm E^m +\nabla \cdot \bm E^{m-1} \right)
	-\nabla \cdot \left(n^m\widetilde{\bm E}^{m+1}\right)\nonumber \\
	&-\frac{1}{\Delta t}\nabla \cdot \bm J^m -\nabla^2:\mathcal{S}^m +\nabla \cdot \left(\bm J^m\times \bm B^m\right).
	\end{align*}
This is the discretization of Eq. \eqref{eq:rmm_e}.

The particle velocities $\bm v_{p,1}^{m+1}$, $\bm v_{p,2}^{m+1}$ and  positions $\bm x_{p,1}^{m+1}$, $\bm x_{p,2}^{m+1}$ are updated with two parts of the electric fields
from Eqs. \eqref{eq:apec:particle1} and \eqref{eq:apec:particle1_2}. This splitting operation is consistent with the discretization of the un-splitting motion equations. 
The additional advantage of this splitting technique, i.e., $\bm v_p^{m+1} = \bm v_{p,1}^{m+1} +\xi^{m+1} \bm v_{p,2}^{m+1}$, is the energy conservation being preserved exactly.
Moreover, $\bm v_{p,2}^{m+1}$ and $\bm x_{p,2}^{m+1}$ in Eq. \eqref{eq:apec:particle1_2} vanish at the $\Delta t\rightarrow0$ limit, implying that a finite $\xi^{m+1}$ does not
affect the consistency of the discretization for  Newton's equations of motion.
Therefore, the whole algorithm is first-order consistent with the reformulated particle-Maxwell system with the Boris correction, and the energy is preserved exactly at the same time.
	
\end{proof}

\section{Numerical results}
\label{sec:numerical}

We perform numerical results of the APEC PIC method.
We focus on numerical solutions of the one-dimensional electrostatic model of two species with
ions being motionless treated as the uniform background and only electrons present in the model.
At the electrostatic limit,
the magnetic field is disregarded $\bm B=0$ and the Faraday equation $\partial_{t} {\bm B}+\nabla\times {\bm E}=0$ reduces
to $\nabla\times {\bm E}=0$, which indicates that the electric field is irrotational. The electric field can be solved
through the gradient of the electric potential, $\bm E=-\nabla\phi$. Thus, in 1D, the original Vlasov-Maxwell system is reduced
to the Vlasov-Poisson system,
 \begin{subequations}\label{eq:vlasov:poisson}
\begin{align}
   &\displaystyle \partial_{t}f+ {v} \partial_{x}f- {E} \partial_{v}f=0,\\
    &\displaystyle - \lambda^2 \partial_{x} \phi=1-n.
\end{align}
\end{subequations}
Correspondingly, the particle-Poisson equations read,
\begin{subequations}
\begin{align}
  &\displaystyle \frac{d x_{p}}{dt}=v_{p},\\
    &\displaystyle \frac{d v_{p}}{d t}=-E(x_{p}), p=1,\cdots, N,\\
   &\displaystyle - \lambda^2 \partial_{x}^{2} \phi=1-n.
\end{align}
\end{subequations}
With periodic or homogeneous Dirichlet boundary condition of the Poisson equation, the particle-Poisson system has the following conservation law of energy,
\begin{equation}
\frac{d}{dt}\left[\sum\limits_{p=1}\limits^{N_{p}}\frac{1}{2}w_{p}v_{p}^{2}+\frac{\lambda^{2}}{2}\int_{\Omega_{x}}E(x)^{2}dx\right]=0.
\end{equation}

The APEC PIC algorithm for the 1D Vlasov-Poisson equations can be obtained by following Algorithm \ref{alg:APEC} with
the magnetic field set to be zero. We use central finite-difference scheme to solve the Poisson equation with the space domain $[a, b]$
being equidistantly divided into $N_x$ cells. We use the fourth-order B-spline function  as the shape function \cite{2017SMILEI}.
For the enforcement of the energy conserving, the quadratic equation of $\xi$ is simplified to,
$\bar{A}\xi^{2}+\bar{B}\xi+\bar{C}=2W_0$, with coefficients
\begin{align*}
&\bar{A}=w_p \sum_p (v_{p,2}^{m+1})^2,~\bar{B} = 2\sum_p w_pv_{p,1}^{m+1} v_{p,2}^{m+1}, ~\hbox{and}, \nonumber \\
&\bar{C} = \lambda^2 \sum_i ({E}_i^{m+1})^2 \Delta x+w_p \sum_p(v_{p,1}^{m+1})^2.
\end{align*}
This equation has two roots and we take the one close to 1 as the Lagrange multiplier. If there are no real roots, we
set $\xi=1$ which reduces to the AP scheme described in Section \ref{sec:AP}. 

We perform numerical calculations for three benchmark problems including the linear Landau damping,
the bump-on-tail problem and the two-stream instability. Besides the APEC scheme, we also
do the calculation with the AP particle scheme shown in Section \ref{sec:AP} and the classical
explicit scheme \cite{birdsall2018plasma} for a comparison study.
The classical explicit scheme is built on the leapfrog scheme both the particle motion and the Maxwell equations
together with a Boris correction for the enforcement of Gauss law. However, this explicit algorithm is not exactly energy conserving,
despite that the energy variation will be small for smaller temporal and spatial steps \cite{lapenta2011particle}.
Algorithm \ref{alg:explicitscheme} displays the classical explicit PIC scheme for Vlasov-Poisson
equations.

\begin{breakablealgorithm}
\caption{\sf The classical explicit PIC scheme for Vlasov-Poisson equations}
\begin{algorithmic}[1]
\vspace{0.5ex}
\State \quad Initialization: Given $\Delta t$, $\Delta \bm x$ and total time $T$, assign velocities and positions of all particles.
Calculate initial charge densities through the particle-to-grid assignment. Solve the Poisson's equation to obtain
Gauss-Law-satisfying initial electric field.
\State \mbox{\textbf{While}} $t^{m+1} < T$
\State \quad Update  particle velocities by $ \bm v_p^{m+1/2}-\bm v_p^{m-1/2}=-\Delta t\bm E^m(\bm x_p^m)$ and
 positions by $ \bm x_p^{m+1} -\bm x_p^m=\Delta t \bm v_p^{m+1/2}$.
\State \quad Calculate current and charge densities on grid sites, $\bm J^{m+1}=\bm J(\bm X^{m+1}, \bm V^{m+1/2})$ and $n^{m+1}=n(\bm X^{m+1})$, through the
particle-to-grid assignment.
\State \quad Update the electric field by $\lambda^2(\widetilde{\bm E}^{m+1} -\bm E^m)=- \Delta t \bm J^{m+1}$ and obtain the Boris correction
$\nabla P$ through $\lambda^2 \Delta P=\lambda^2 \nabla \cdot \widetilde{\bm E}^{m+1}-(1-n^{m+1})$.
\State \quad Correct the electric field, $\bm E^{m+1} =\widetilde{\bm E}^{m+1}-\nabla P$.
\State \quad $m=m+1$.
\State  \mbox{\textbf{End While}}
\end{algorithmic}
\label{alg:explicitscheme}
\end{breakablealgorithm}

\subsection{Landau damping}

To investigate our APEC algorithm, we first test the performance with a classical experiment in plasma physics, the linear Landau damping \cite{degond2017asymptotic,chen2011energy,
ho2018physics,belaouar2009asymptotically}.
The uniformly distributed electrons are slightly disturbed with an initial distribution function,
\begin{equation}
     f_{0}(x,v)=\left[1+\alpha\cos\left(\frac{x}{2}\right)\right]\frac{1}{\sqrt{2\pi}}e^{-\frac{v^{2}}{2}},
\end{equation}
where $\alpha$ is a small constant and we take $\alpha=0.05$ in the test. Ions are motionless and form a uniform background.
The space domain is  $[0,4\pi]$ and the dimensionless parameter takes $\lambda=1$. The Poisson equation is solved with the periodic boundary condition.
Figure \ref{fig:landau:p1}(a) displays the results of the classical explicit, the AP and the APEC algorithms for the resolved case with $N_x=250$, $\Delta t = 0.05$ and $N = 1\times 10^6$.  All PIC algorithms capture the correct damping rate of the perturbed plasma. The results of APEC and explicit algorithms give almost the same rate as the theoretical rate (the dotted line) during the damping period, while the AP algorithm is slightly overdamping. Figure \ref{fig:landau:p1}(b) indicates the change of the total energy ($\Delta W= W-W_0$) during the simulations and in this case $W_0=6.3176$. One can observe that the energy of the proposed APEC algorithm is exactly preserved, but the results of AP and the classical explicit methods monotonically decrease with time with different dissipative rates.

\begin{figure}[H]
\centering
\includegraphics[width=0.45\linewidth]{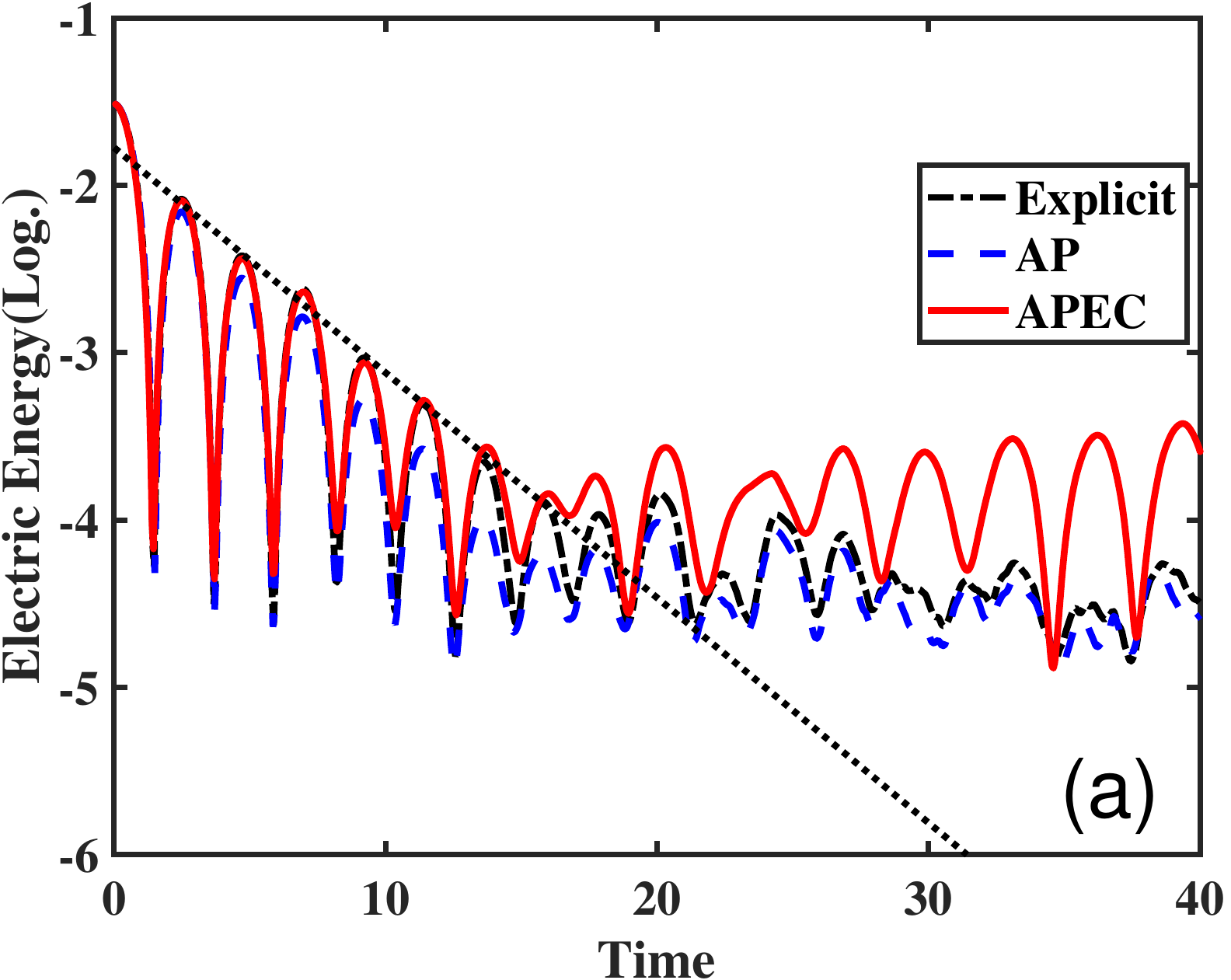}\hspace{2mm}
\includegraphics[width=0.45\linewidth]{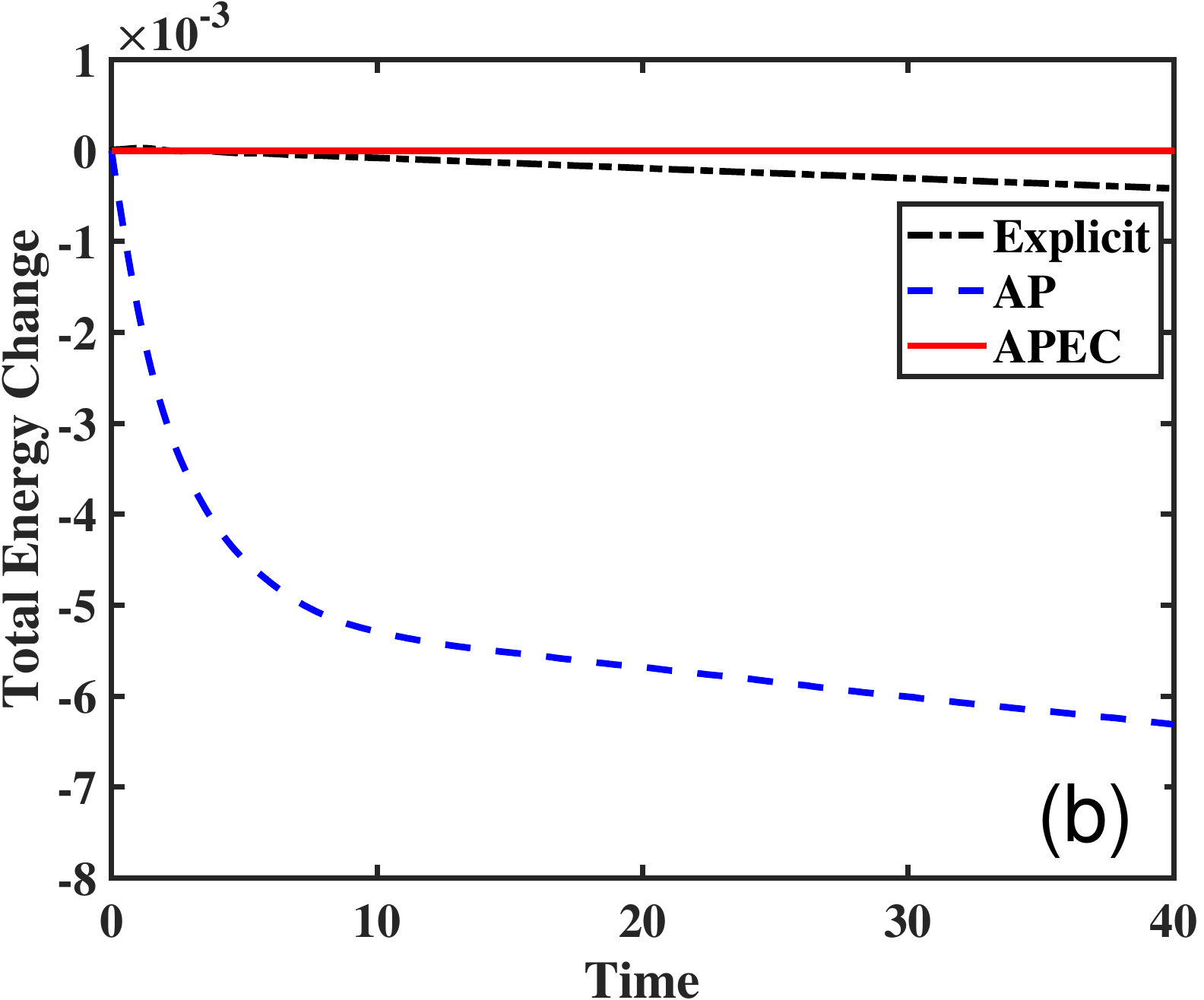}
\caption{Resolved case of the Landau damping calculated by the classical explicit, the AP and the APEC schemes. (a) Electric energy with time; (b) Total energy change with time.}
\label{fig:landau:p1}
\end{figure}

For practical applications, it is important to validate the effectiveness of the PIC methods for simulations with large time steps. Figure \ref{fig:landau:p2} presents the results of the under-resolved case for the three PIC schemes with the same $\lambda$, $N_x$ and $N$ but a large time step $\Delta t=2$. Apparently, both the AP and APEC algorithms are stable while the explicit algorithm is unstable when the time step is too large. The results of the AP-type schemes are consistent with those of energy-conserving schemes \cite{chen2011energy}.

\begin{figure}[H]
\centering
\includegraphics[width=0.45\linewidth]{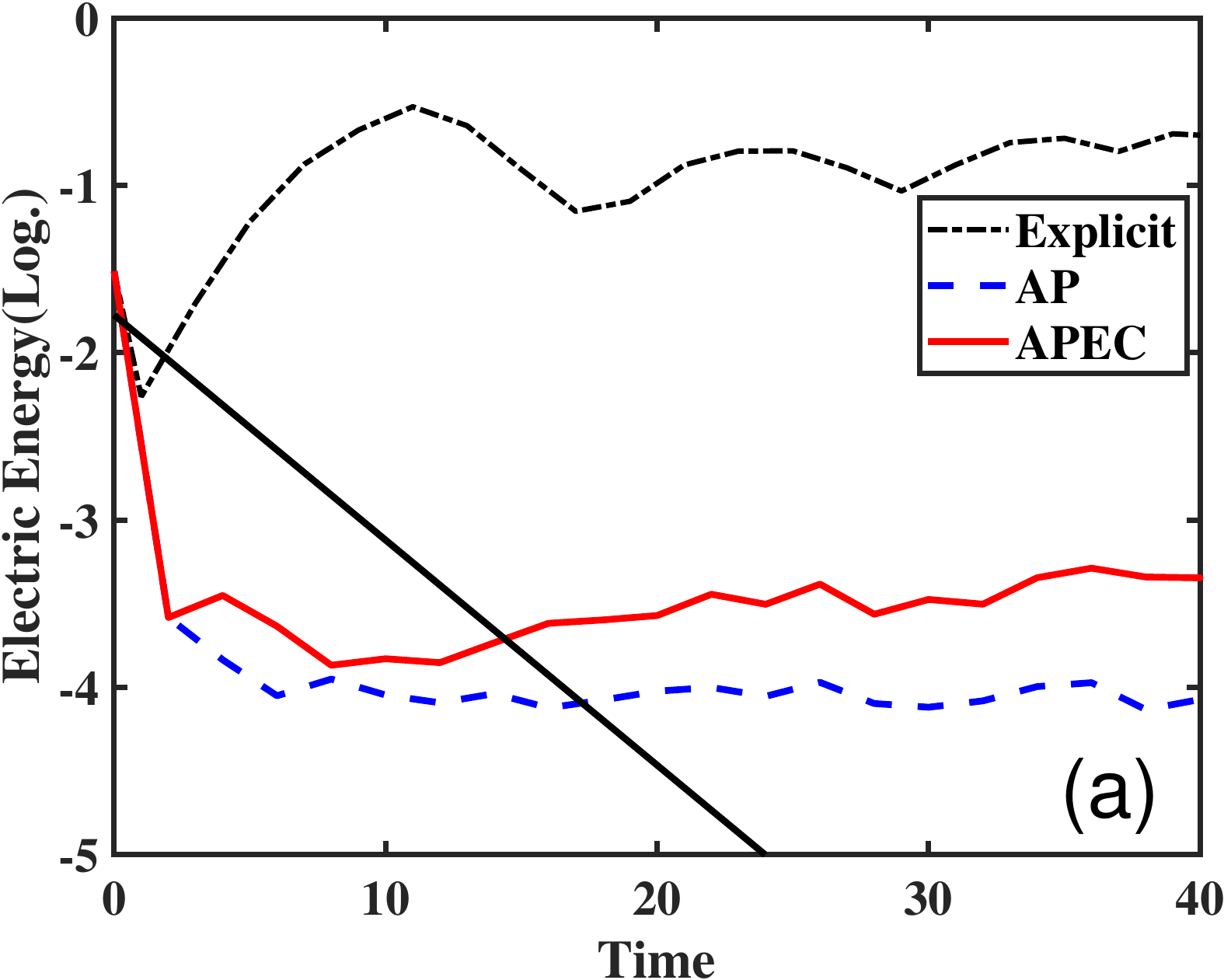}\hspace{2mm}
\includegraphics[width=0.455\linewidth]{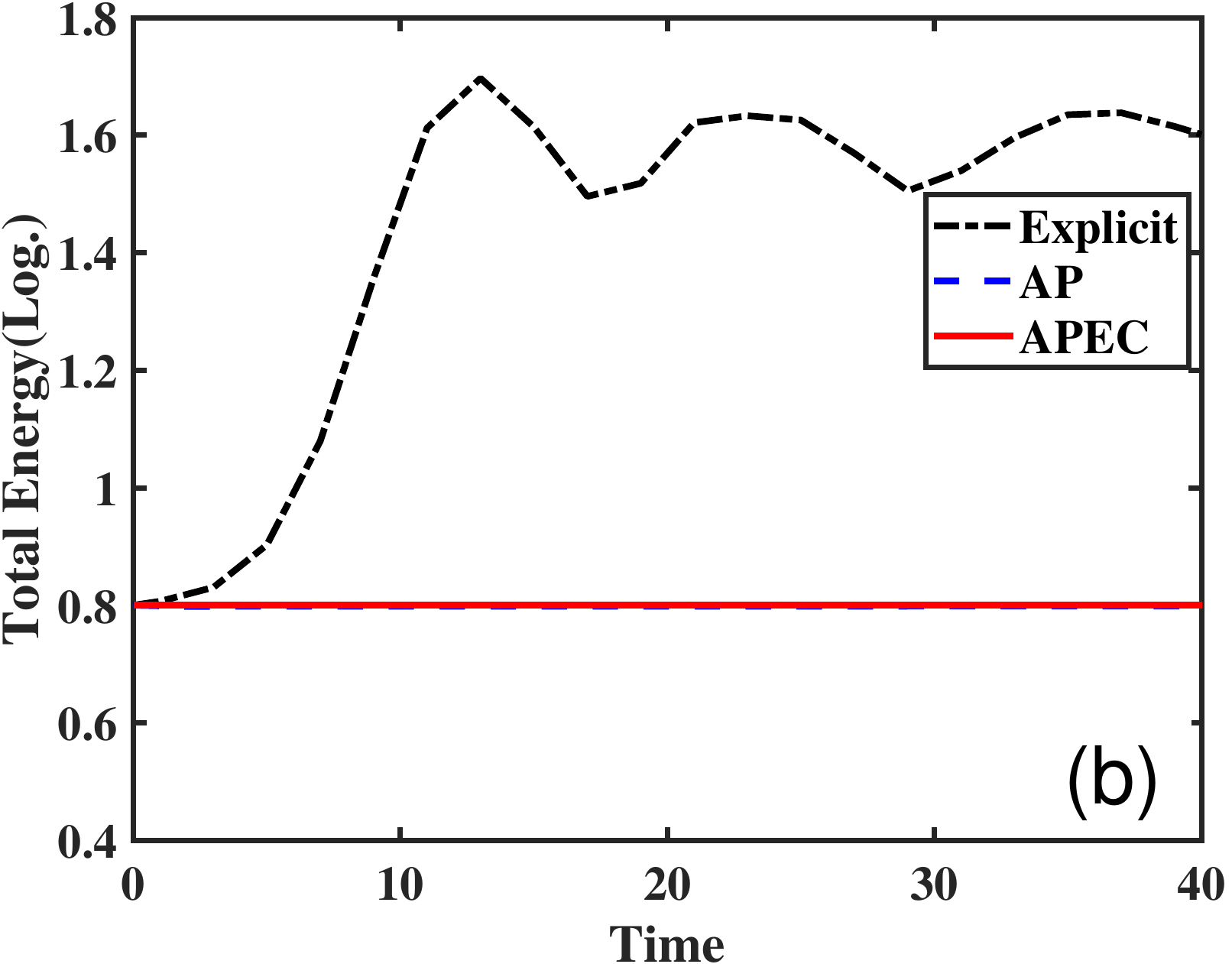}
\caption{Under-resolved case of the Landau damping calculated by the classical explicit, the AP and the APEC schemes. (a) Electric energy with time; (b) Total energy with time.}
\label{fig:landau:p2}
\end{figure}

\subsection{Bump-on-tail instability}

In the second benchmark test, we consider the bump-on-tail instability problem, where the velocity distribution has multiple peaks. The initial density of the Vlasov-Poisson equations is given by,
\begin{align*}
f_0(x,v) =g(v)\left[1+\delta \cos (\kappa x)\right],
\end{align*}
where function $g$ is the summation of two normal distributions with different means and variances,
\begin{align*}
\displaystyle
g(v) = Ce^{-v^2/2} +\alpha e^{- (v-v_d)^2/ (2v_t^2)},
\end{align*}
with constant $C$ being determined to satisfy $\int g(v) dv=1$.
Here we use the similar settings as in Ref. \cite{2010Asymptotic} with $\Omega_x= (0,20\pi)$, $\delta =0.04$, $\kappa =0.3$, $v_d=4.5$, $v_t=0.5$ and $\alpha=2/9$. The Poisson equation is endowed  with a homogeneous Dirichlet boundary condition.
For the resolved case, we set the Debye length $\lambda=1$, the plasm period $\tau_p=\lambda$, the total time $T=200$, the number of macro particles $N=1\times 10^5$ and the grid parameters $N_x=500$ and $\Delta t=0.01$. Figure \ref{fig:bump:p1} displays the electric energy and the total energy change with  time evolution. One can observe  similar evolutions of the electric energy for the three PIC algorithms, which are consistent with literature results \cite{2010Asymptotic,chen2011energy,
belaouar2009asymptotically}. Specifically, the instability increases rapidly from $t = 10$ and the electric energy attains the same maximum value at $t = 20$ for all numerical schemes.
The initial total energy of this system is $W_0= 92.0822$. Panel (b) of the figure shows that the total energy of the APEC method is conserved for the simulation time up to $T=200$, but the AP scheme dissipates the energy to $\sim 95\%$ of the initial energy at time $T$. The explicit scheme performs well in the energy conservation for this resolved case and the total energy has a small oscillation around $W_0$.

In Figure \ref{fig:bump:p1_2}, the velocity distributions for different times are displayed. Obviously, the results of the three PIC algorithms are nearly the same, and in agreement with those in literature \cite{belaouar2009asymptotically}. There are two peaks at earlier times and then the smaller one gradually disappeared  from $t=10$ when the instability increases.

\begin{figure}[H]
\centering
\includegraphics[width=0.45\linewidth]{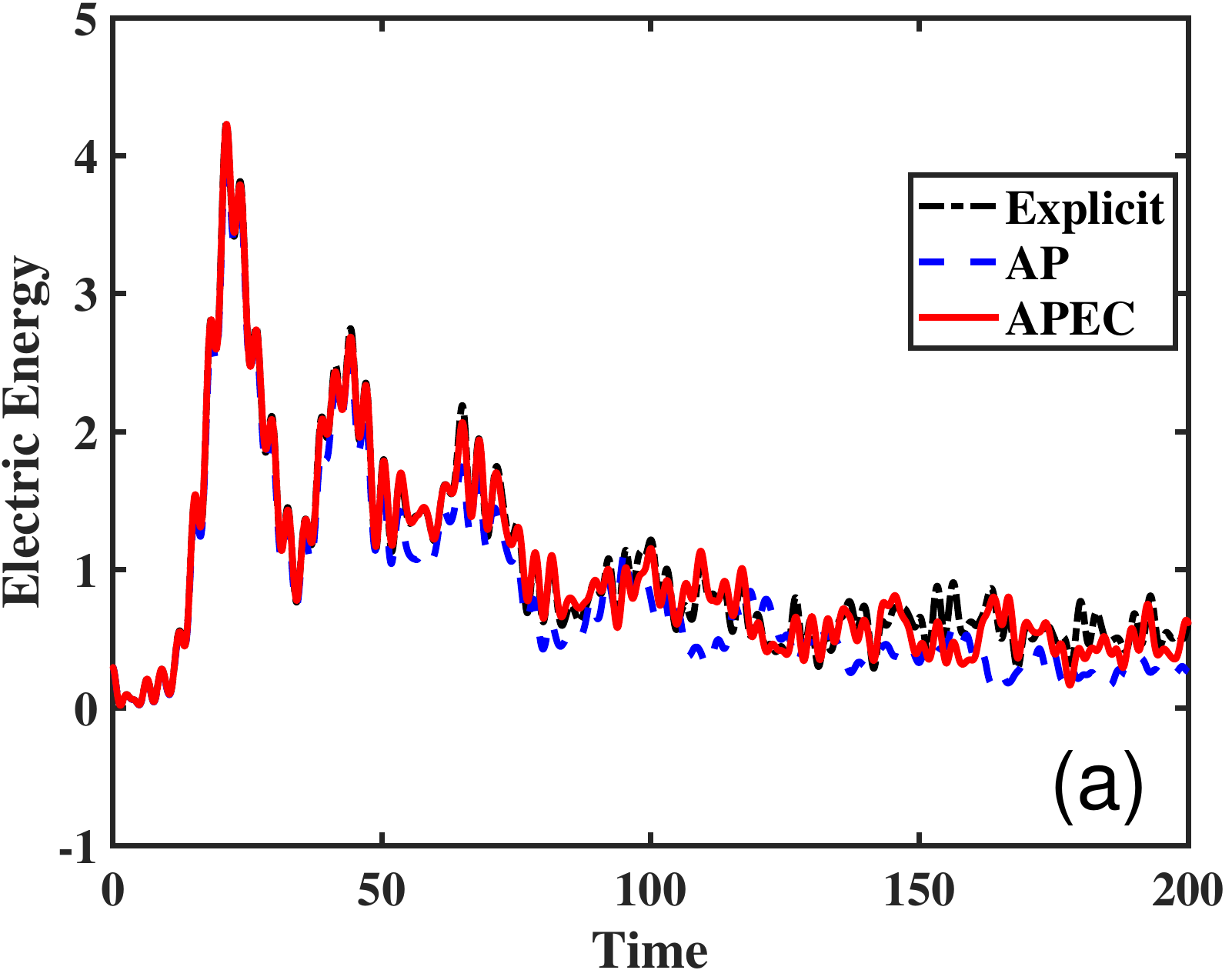}\hspace{2mm}
\includegraphics[width=0.45\linewidth]{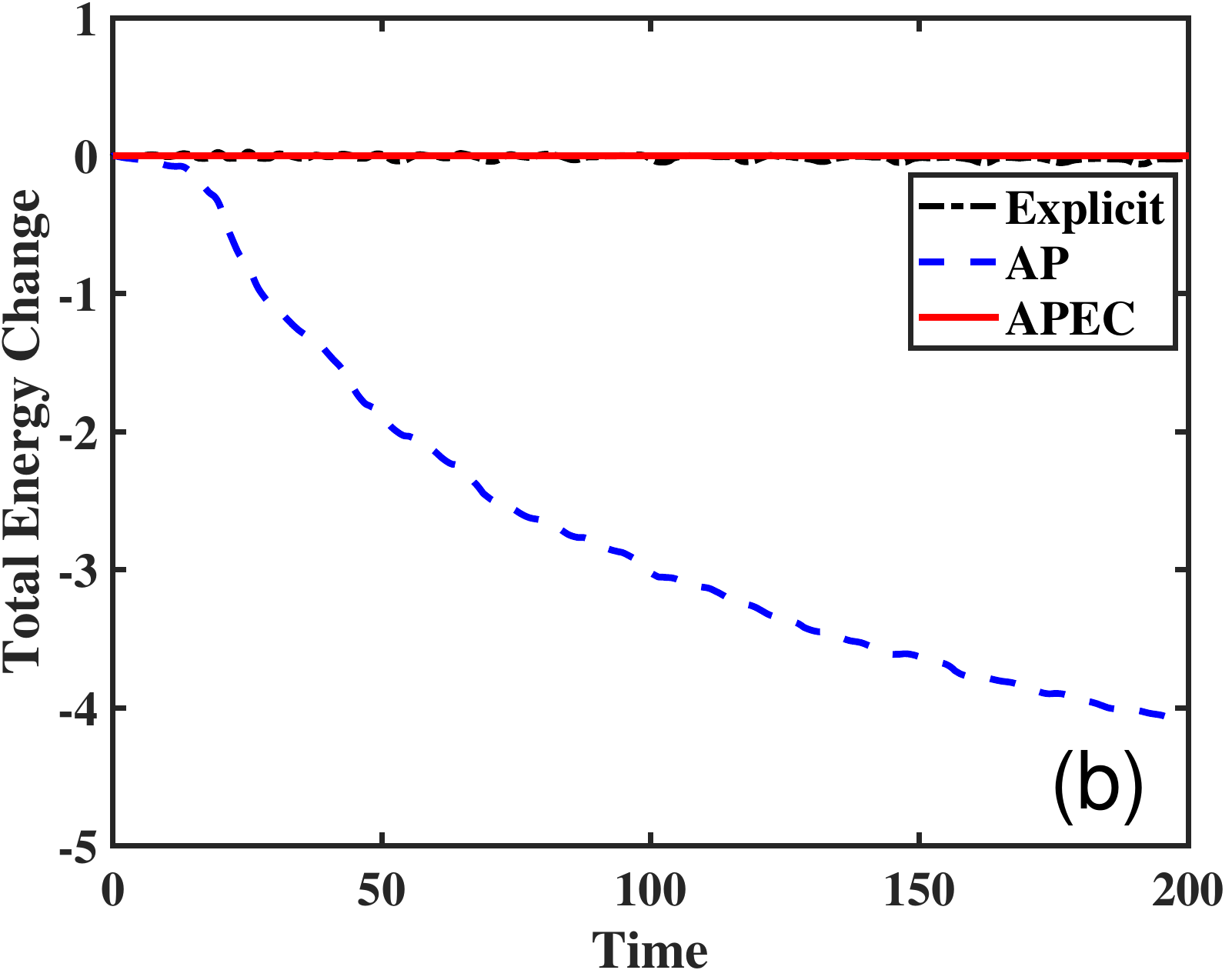}
\caption{Resolved case of the bump-on-tail instability calculated by the classical explicit, the AP and the APEC schemes. (a) Electric energy with time; (b) Total energy change with time.}
\label{fig:bump:p1}
\end{figure}

\begin{figure}[H]
\centering
\includegraphics[width=0.31\linewidth]{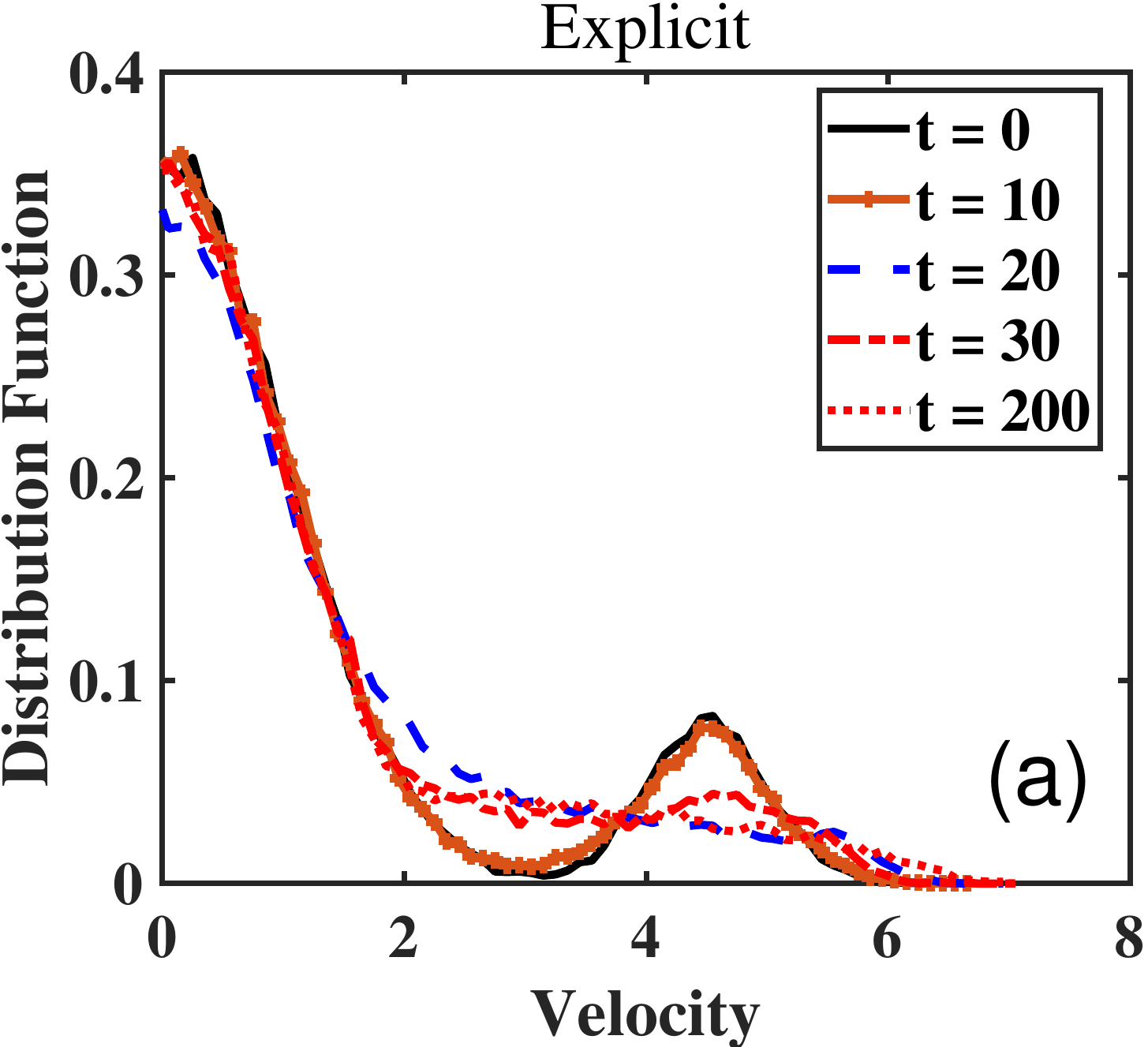}\hspace{1mm}
\includegraphics[width=0.31\linewidth]{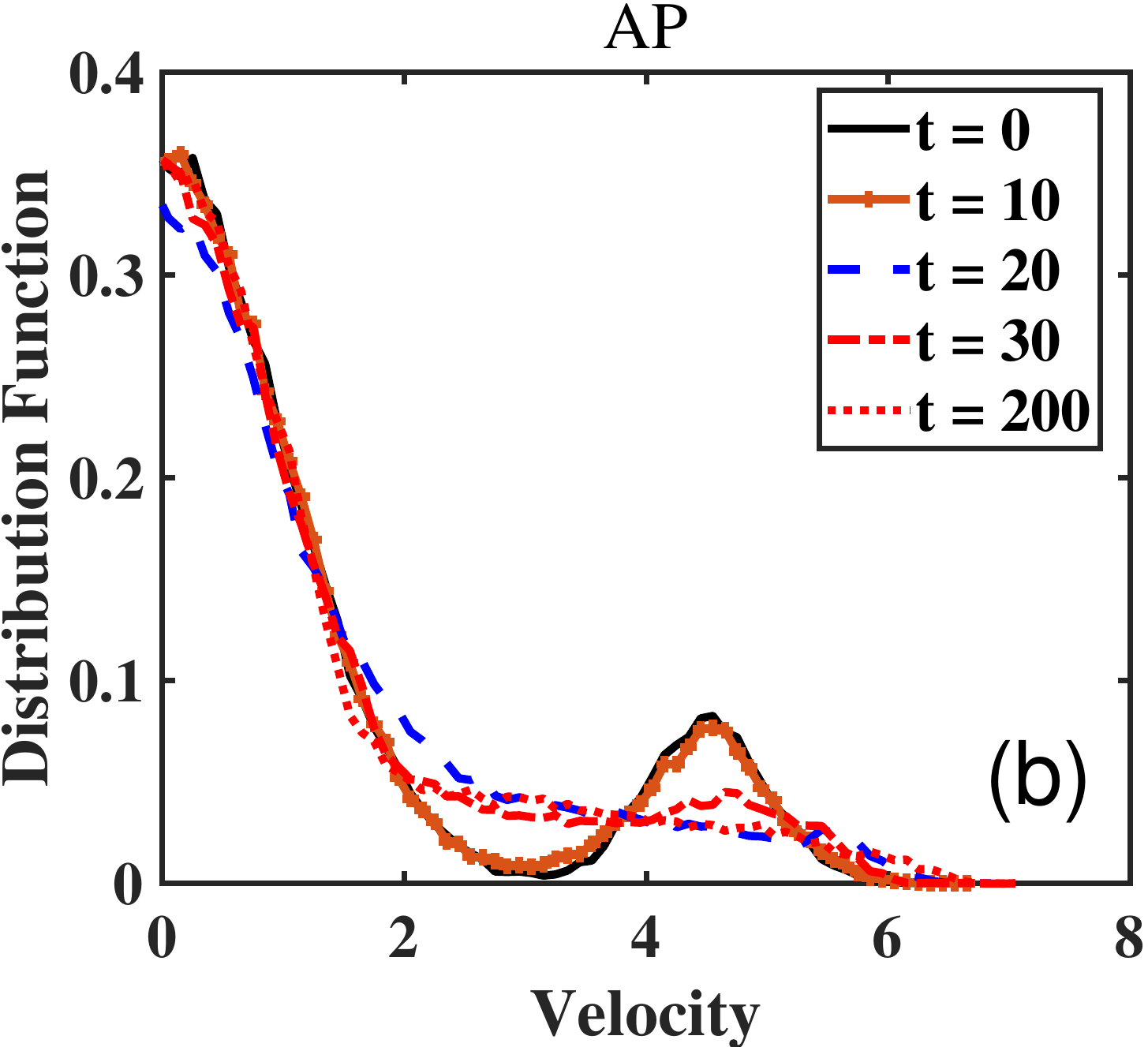}\hspace{1mm}
\includegraphics[width=0.31\linewidth]{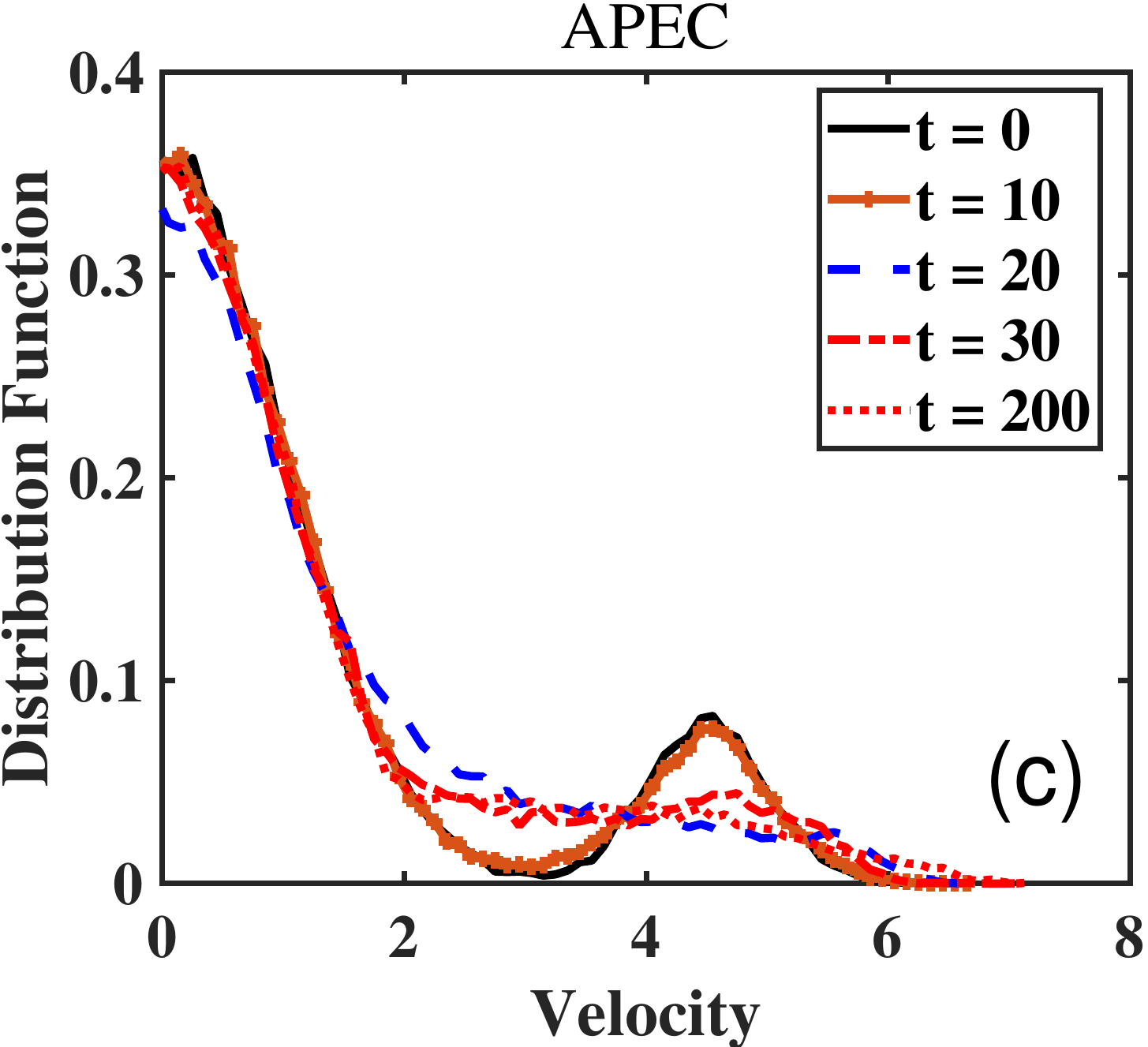}
\caption{ Velocity distributions at different times for the resolved case of the bump-on-tail instability. (a) the classical explicit algorithm;  (b) the AP algorithm; and (c) the APEC algorithm. }
\label{fig:bump:p1_2}
\end{figure}

The results of the under-resolved case are shown in Figure \ref{fig:bump:p2}, where we set the simulation parameters as the Debye length
$\lambda=0.1$ and the number of macro particles $N=1\times 10^6$. The grid parameters are $N_x=20$ and $\Delta t= 0.4$, which satisfy $\Delta x>\lambda$ and $\Delta t>\tau_p$. In this setup, the initial total energy is $W_0=106.1615$. As expected, the classical explicit algorithm is unstable and blows up in this case due to the large time step, while both AP and APEC algorithms provide the stable results. Again, the total energy of the APEC algorithm is conserved for the whole simulation time for this under-resolved case. With the same setup,
the velocity distributions at different times are displayed in Figure \ref{fig:bump:p2_2}. One can see that the explicit algorithm heats up the system quickly
such that the macro particles with smaller velocities vanish soon.
One can see that the AP and APEC algorithms give  similar results, in consistent with those in Refs. \cite{belaouar2009asymptotically,2010Asymptotic}.
Since the APEC scheme has no energy dissipation, the number of the macro particles with large velocities ($v_p>8$) is slightly bigger that the results of the AP scheme.
This is in agreement with our intuition.

Additionally, in order to validate the APEC method with different time steps, we show in Figure \ref{fig:bump:p3} the results of parameters $\lambda=0.1$ and $N_x=20$ for $\Delta t=0.01, 0.1$ and $0.2$, where the electric energy with the time evolution is calculated. One can clearly observe that the APEC algorithm is stable and the electric energy curve has similar behavior for different time steps.
\begin{figure}[H]
\centering
\includegraphics[width=0.45\linewidth]{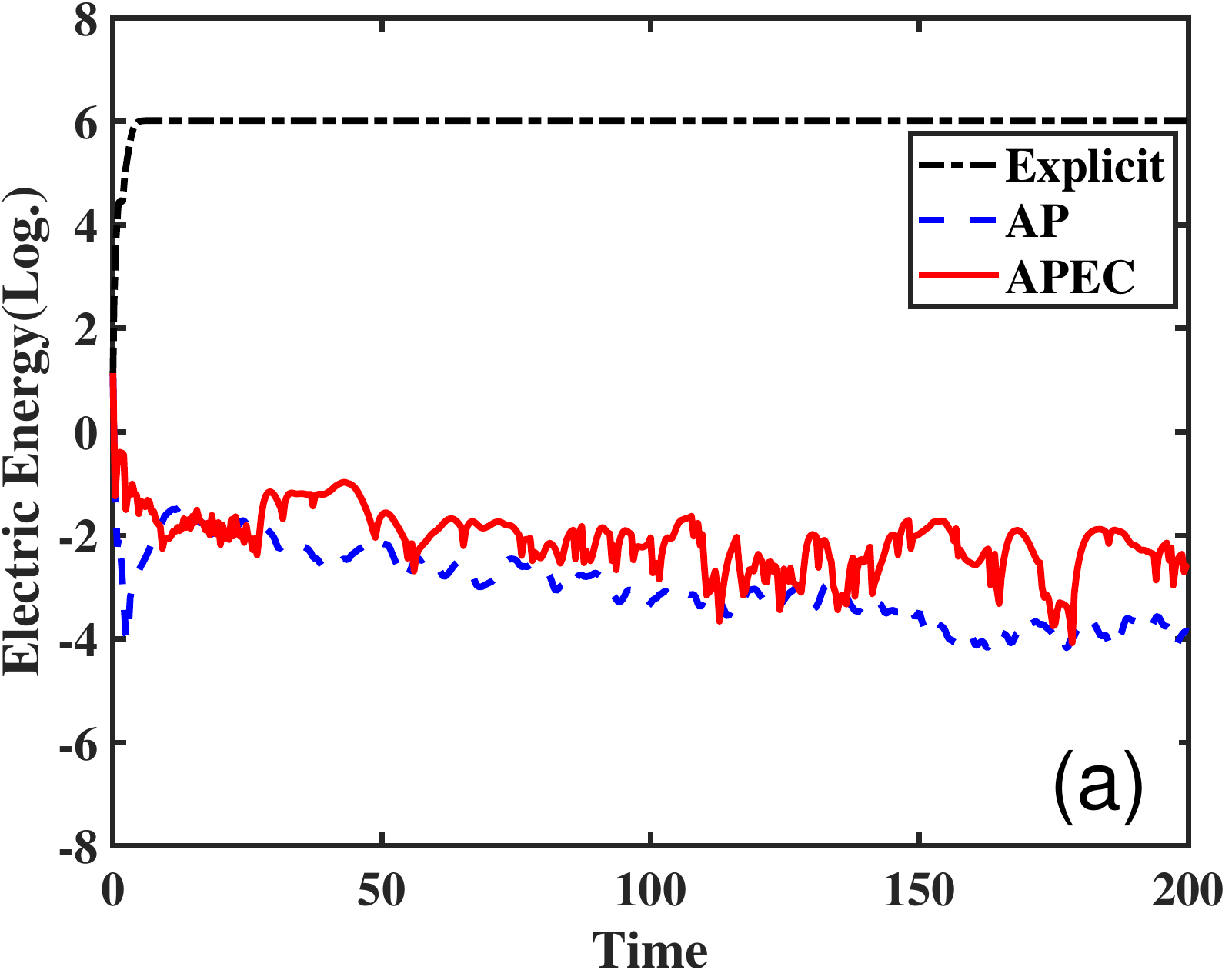}\hspace{2mm}
\includegraphics[width=0.445\linewidth]{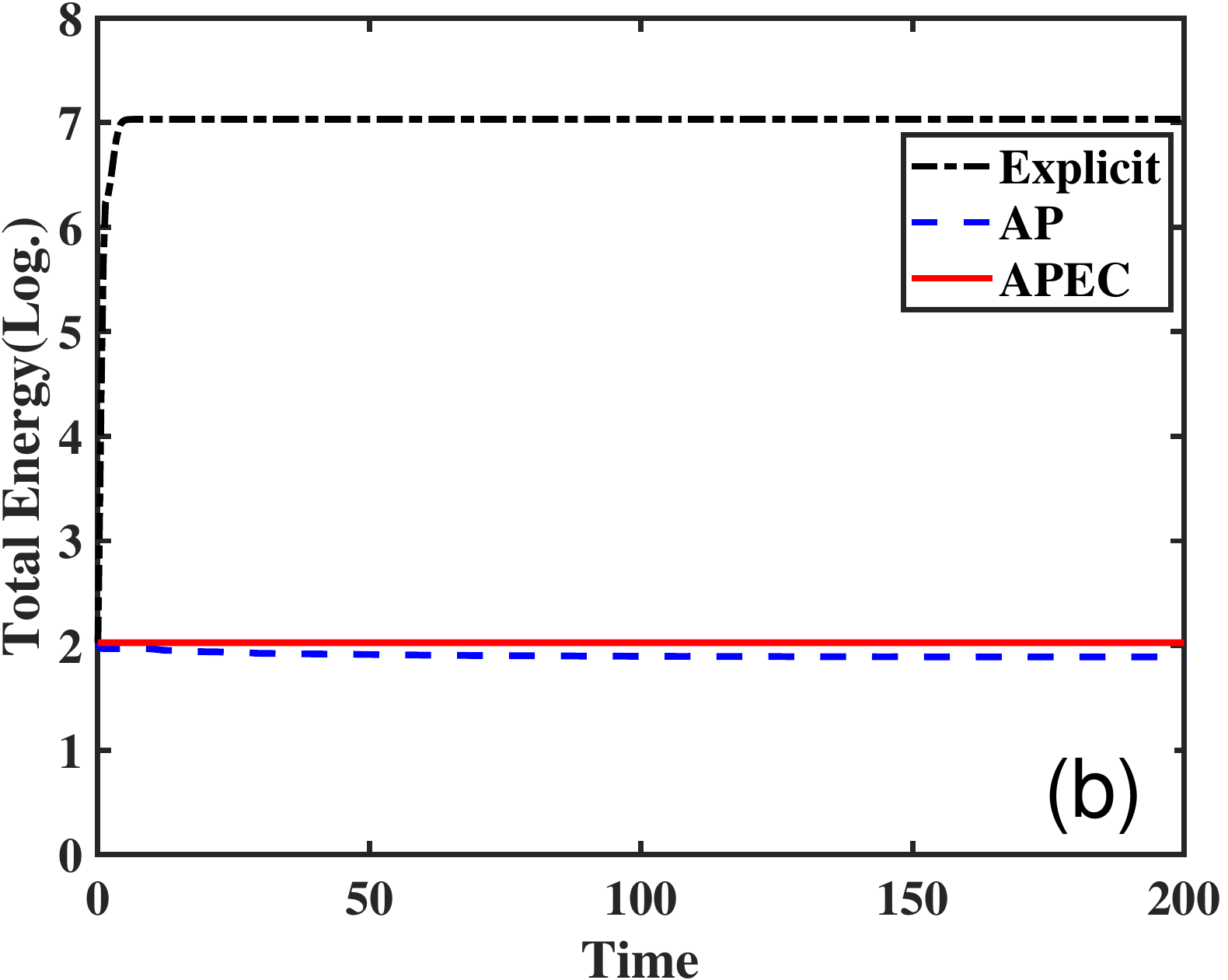}
\caption{Under-resolved case of the bump-on-tail instability calculated by the classical explicit, the AP and the APEC schemes. (a) Electric energy with time; (b) Total energy with time.}
\label{fig:bump:p2}
\end{figure}

\begin{figure}[H]
\centering
\includegraphics[width=0.31\linewidth]{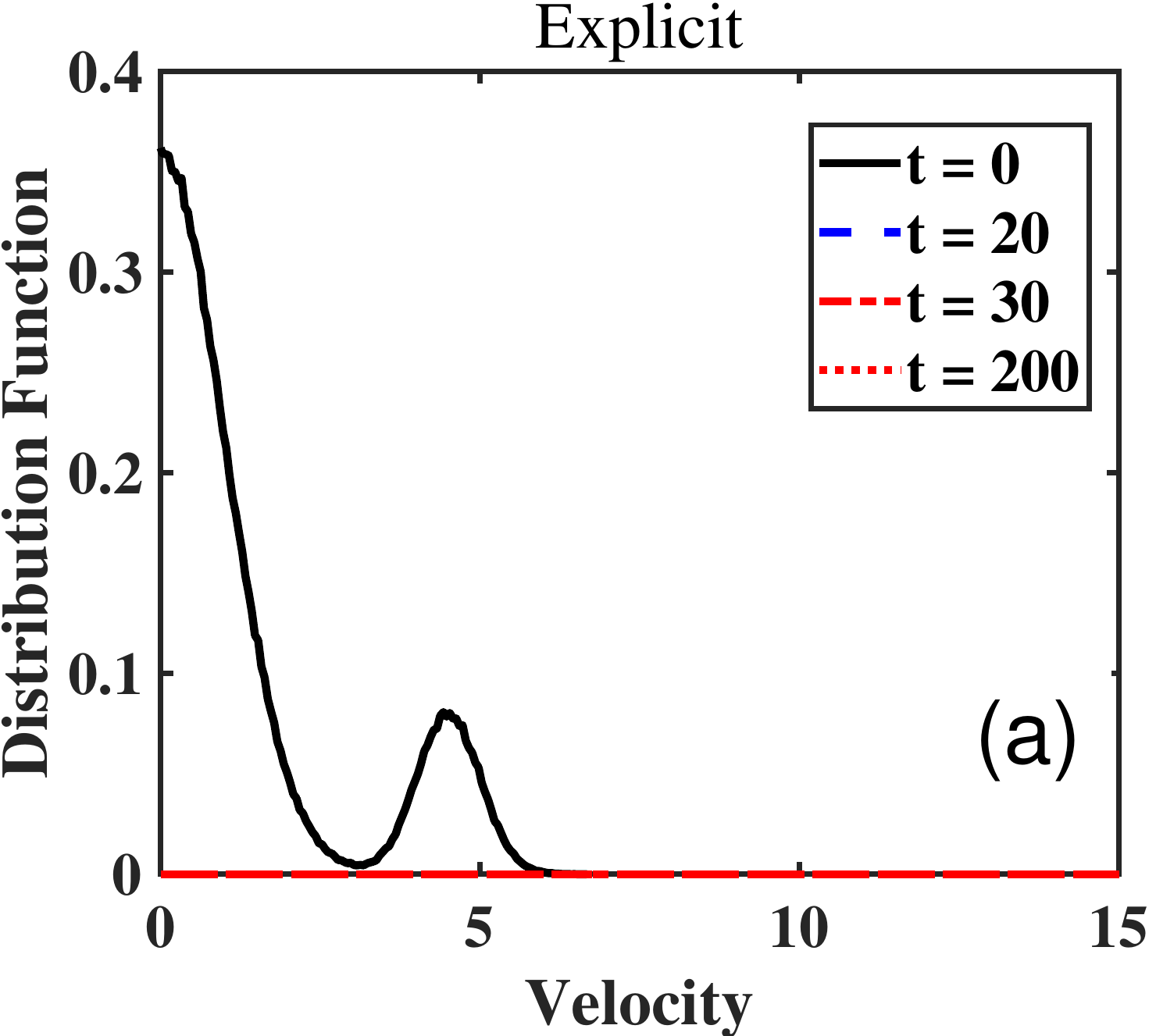}
\includegraphics[width=0.31\linewidth]{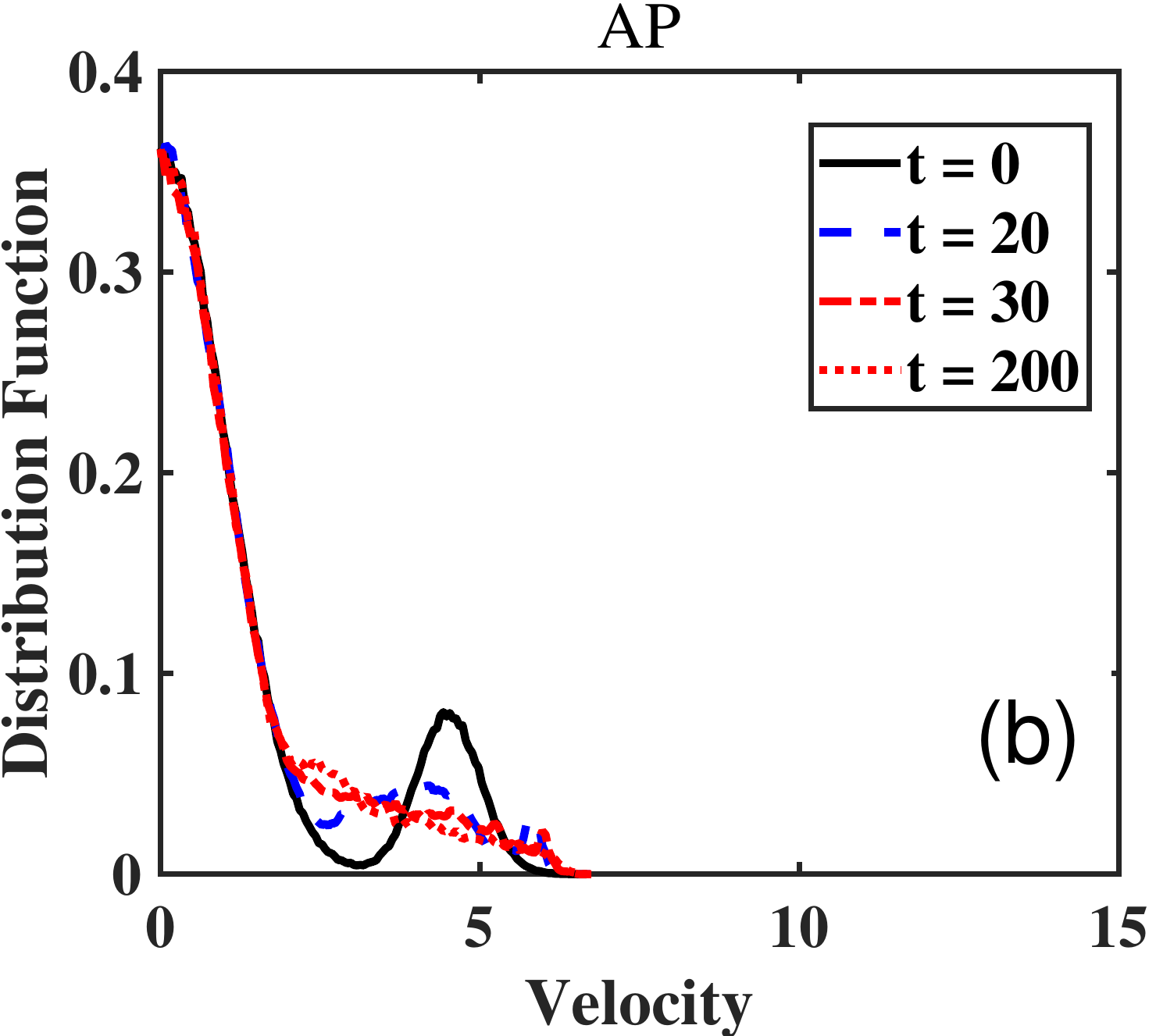}
\includegraphics[width=0.31\linewidth]{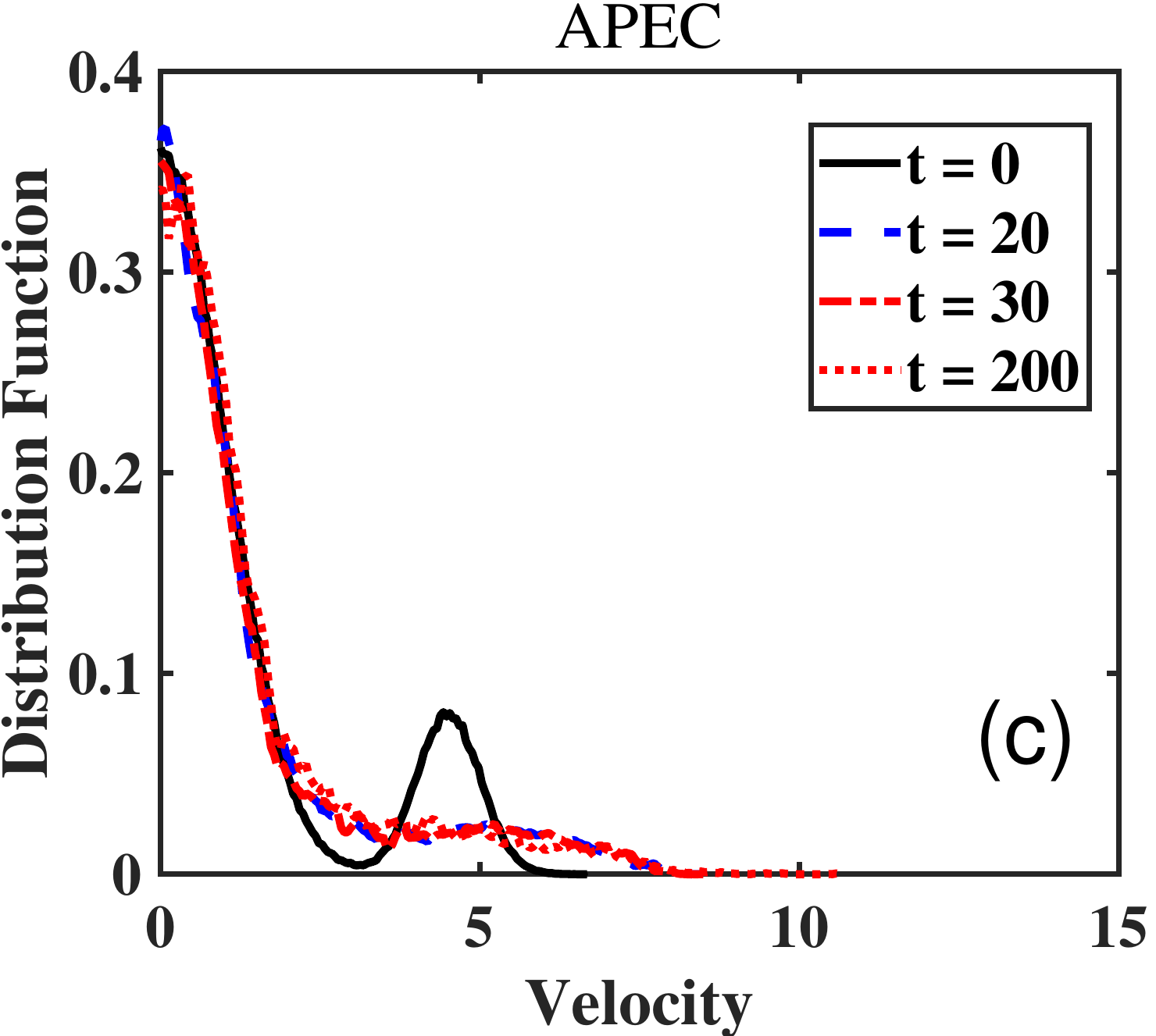}
\caption{Velocity distributions at different times for the under-resolved case of the bump-on-tail instability. (a) the classical explicit algorithm;  (b) the AP algorithm; and (c) the APEC algorithm.}
\label{fig:bump:p2_2}
\end{figure}

\begin{figure}[H]
\centering
\includegraphics[width=0.45\linewidth]{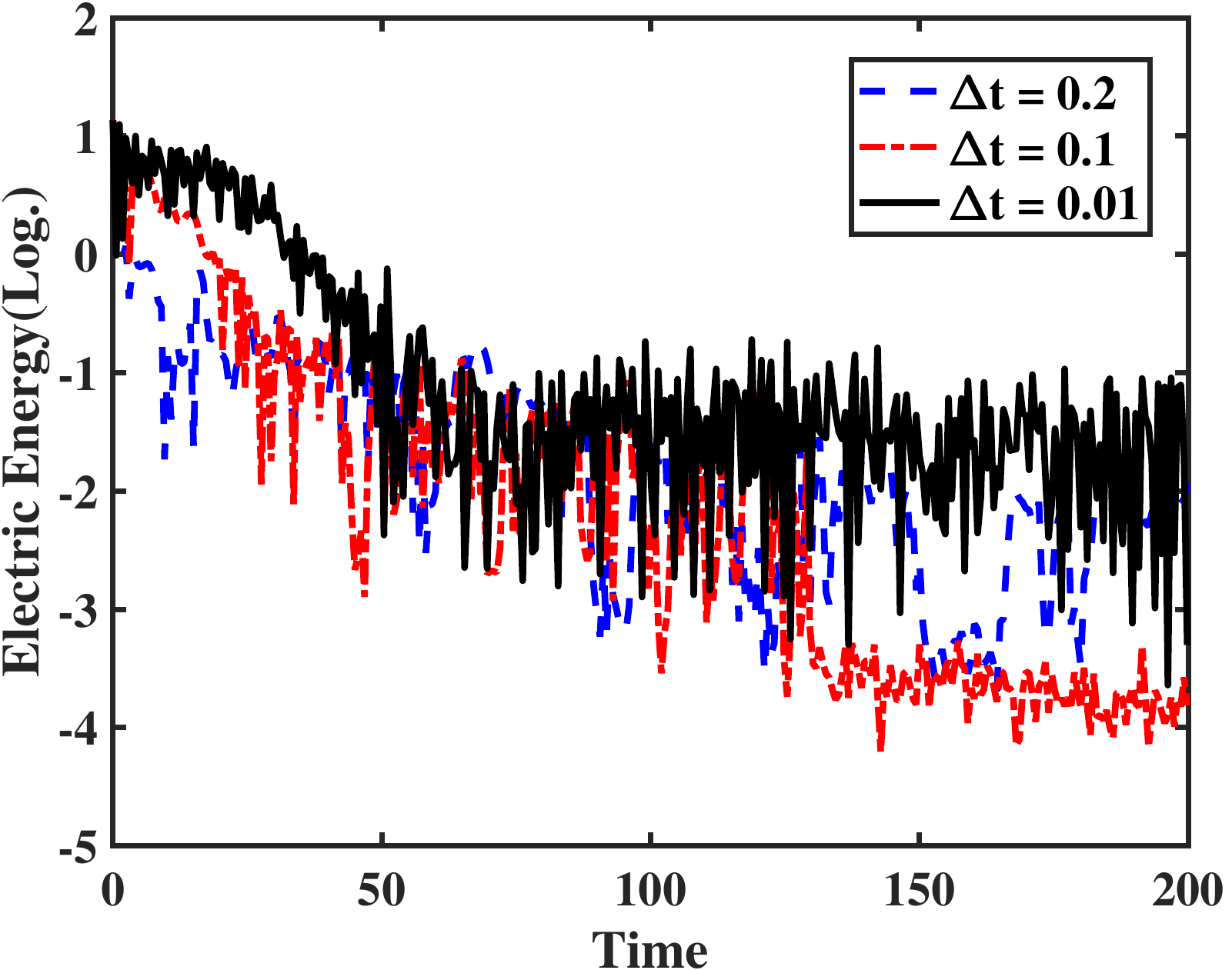}
\caption{Electric energy with time by the APEC algorithm for the bump-on-tail instability with different time steps $\Delta t=0.01, 0.1$ and $0.2$.  }
\label{fig:bump:p3}
\end{figure}

\subsection{Two-stream instability}
The third benchmark test is the two-stream instability which has been often studied in plasma literature \cite{ho2018physics,chen2011energy,stix1992waves}.
The initial density distribution is
\begin{align*}
f_0(x,v) =\frac{1}{2\sqrt{2\pi}\sigma}\left[e^{-(v+v_b)^2/2\sigma^2} + e^{-(v-v_b)^2/2\sigma^2}\right] \left( 1+\alpha \cos x\right),
\end{align*}
with $v_b=\sqrt{3}/2$ being the beam speed, $x\in (0,2\pi)$ and $\sigma=0.008$. For the resolved case, we set $\lambda =0.5$, $\alpha=0.005$, $N_x=64$, $N=1\times 10^5$ and $\Delta t=0.02$. The initial total energy is $2.3565$. The periodic boundary condition is used for the Poisson equation. The results are displayed in Figure \ref{fig:twostream:p3}. Again, all the methods give similar results and have similar growing rates on the electric energy at the begin. However, when $t$ is about 9, the total energies change significantly for the results predicted by the explicit and the AP algorithms due to the interaction of plasmas of opposite speeds. During the interaction and the following run, the APEC method conserves  the energy and shows the best performance among the algorithms. Without the energy conservation, the AP algorithm has dissipation. Interestly, the explicit method dissipates the energy during the interaction, but returns to the initial energy at later time. Figure \ref{fig:twostream:p6} presents the distributions of the macro particles in the phase space for three snapshots. Clearly, during the growth period of the instability with the same rate shown in Figure \ref{fig:twostream:p3}(a),  the three PIC algorithms predict similar results. But after the instability getting saturated at $t\approx10$, the particle distributions of the three methods become pretty different in some portions of the phase space.

\begin{figure}[H]
\centering
\includegraphics[width=0.45\linewidth]{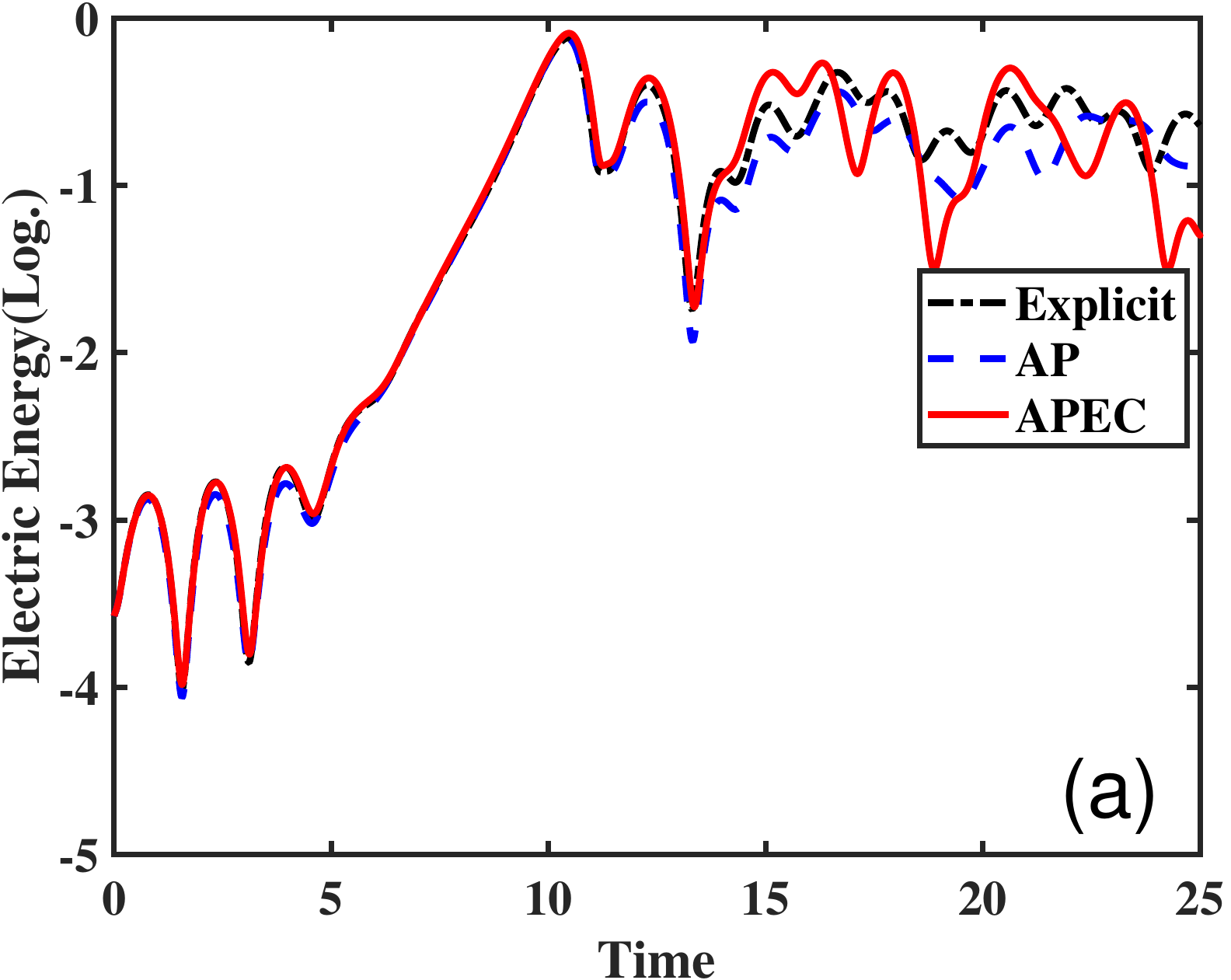}\hspace{2mm}
\includegraphics[width=0.465\linewidth]{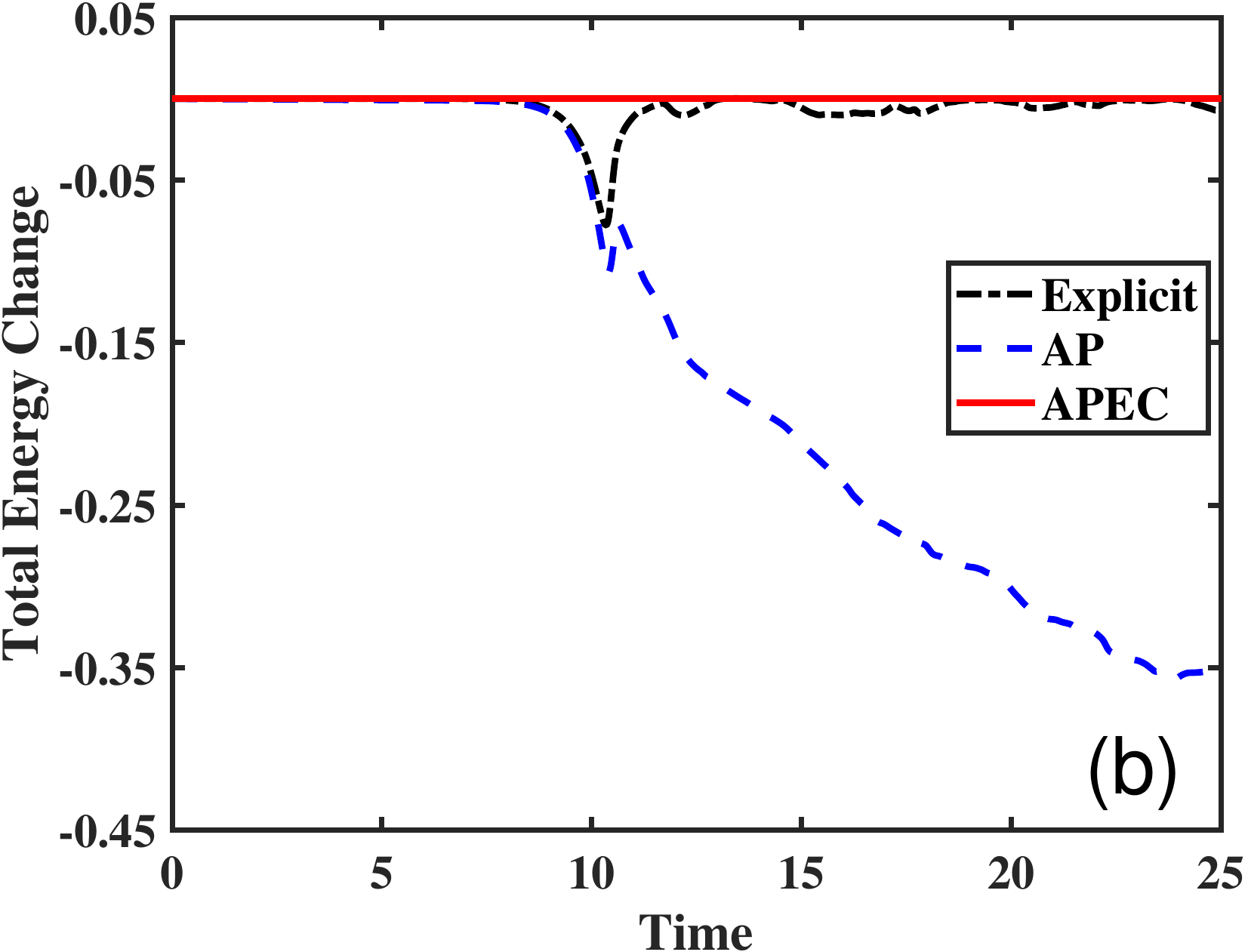}
\caption{Resolved case of the two-stream instability calculated by the classical explicit, the AP and the APEC schemes. (a) Electric energy with time; (b) Total energy with time.}
\label{fig:twostream:p3}
\end{figure}

\begin{figure}[H]
\centering
\includegraphics[width=0.32\linewidth]{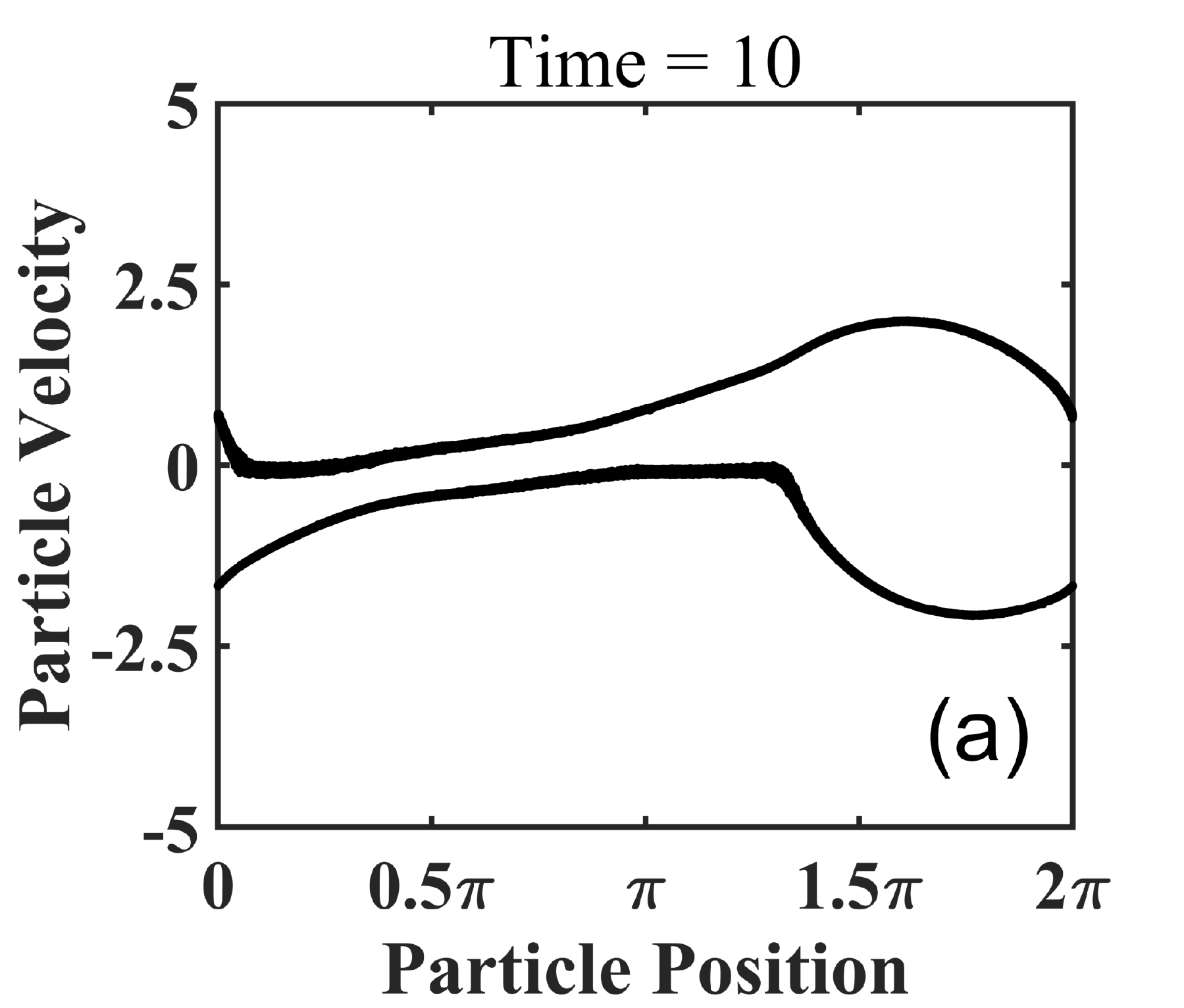}
\includegraphics[width=0.32\linewidth]{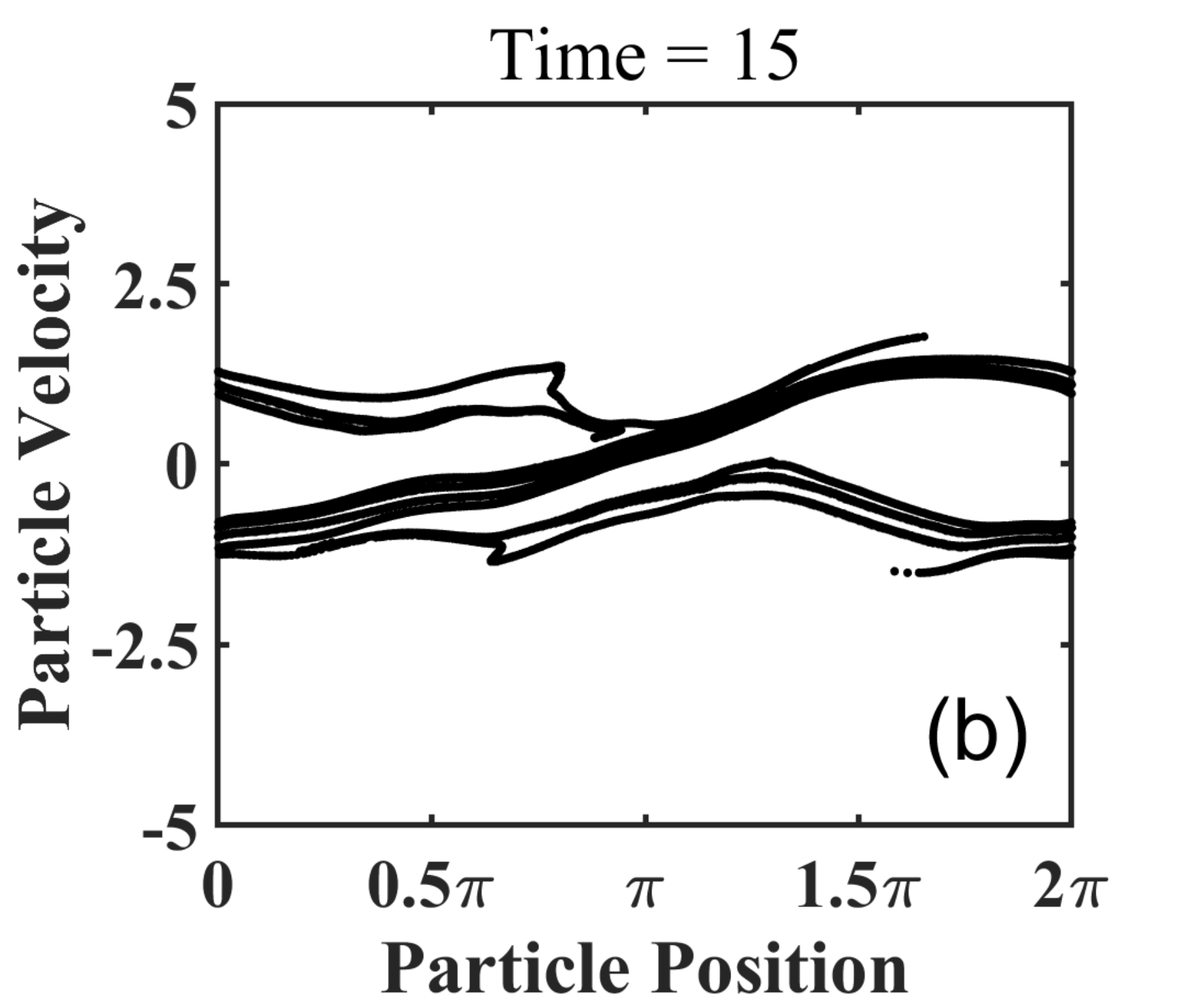}
\includegraphics[width=0.32\linewidth]{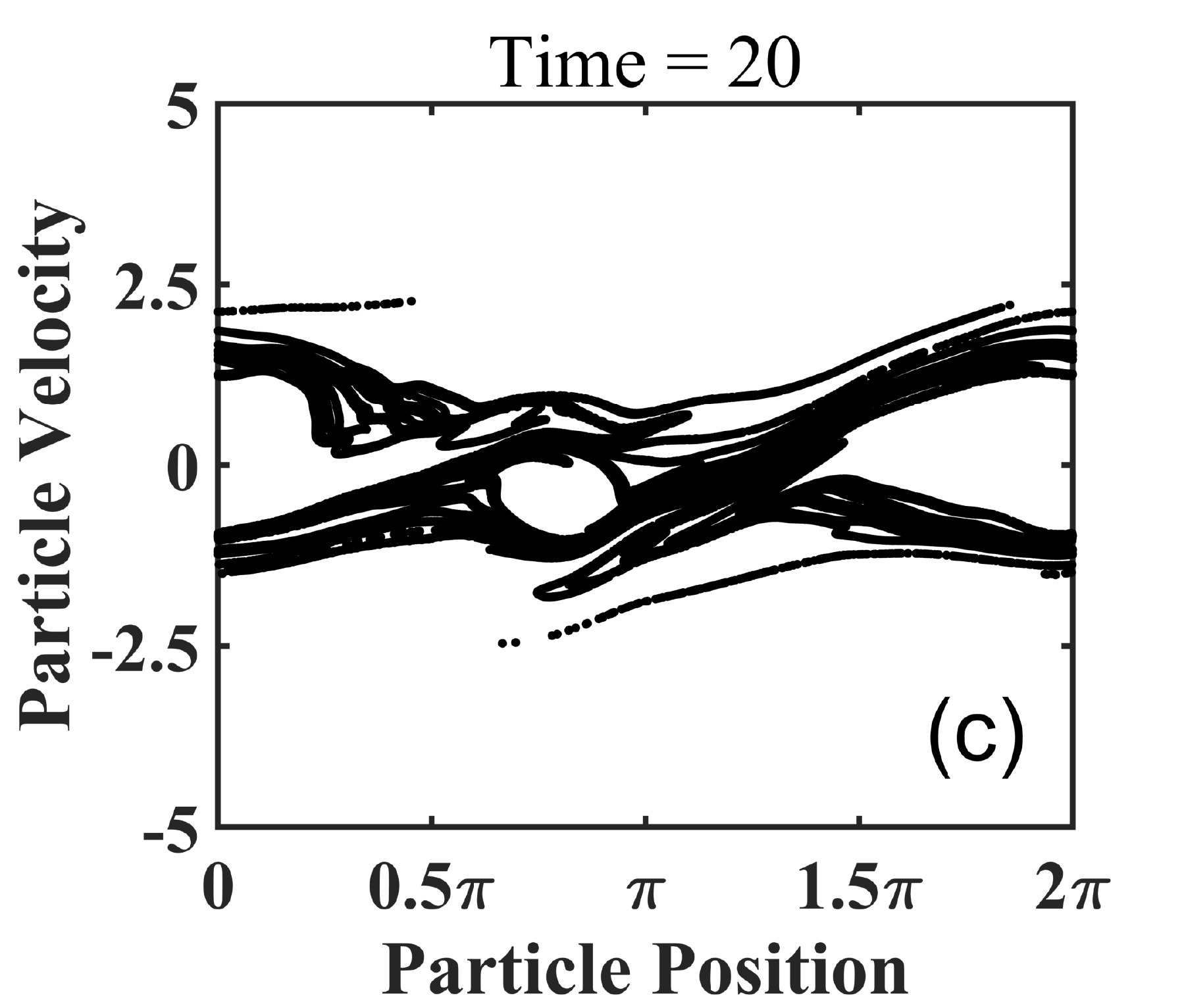}\\
\vspace{0.15mm}
\includegraphics[width=0.32\linewidth]{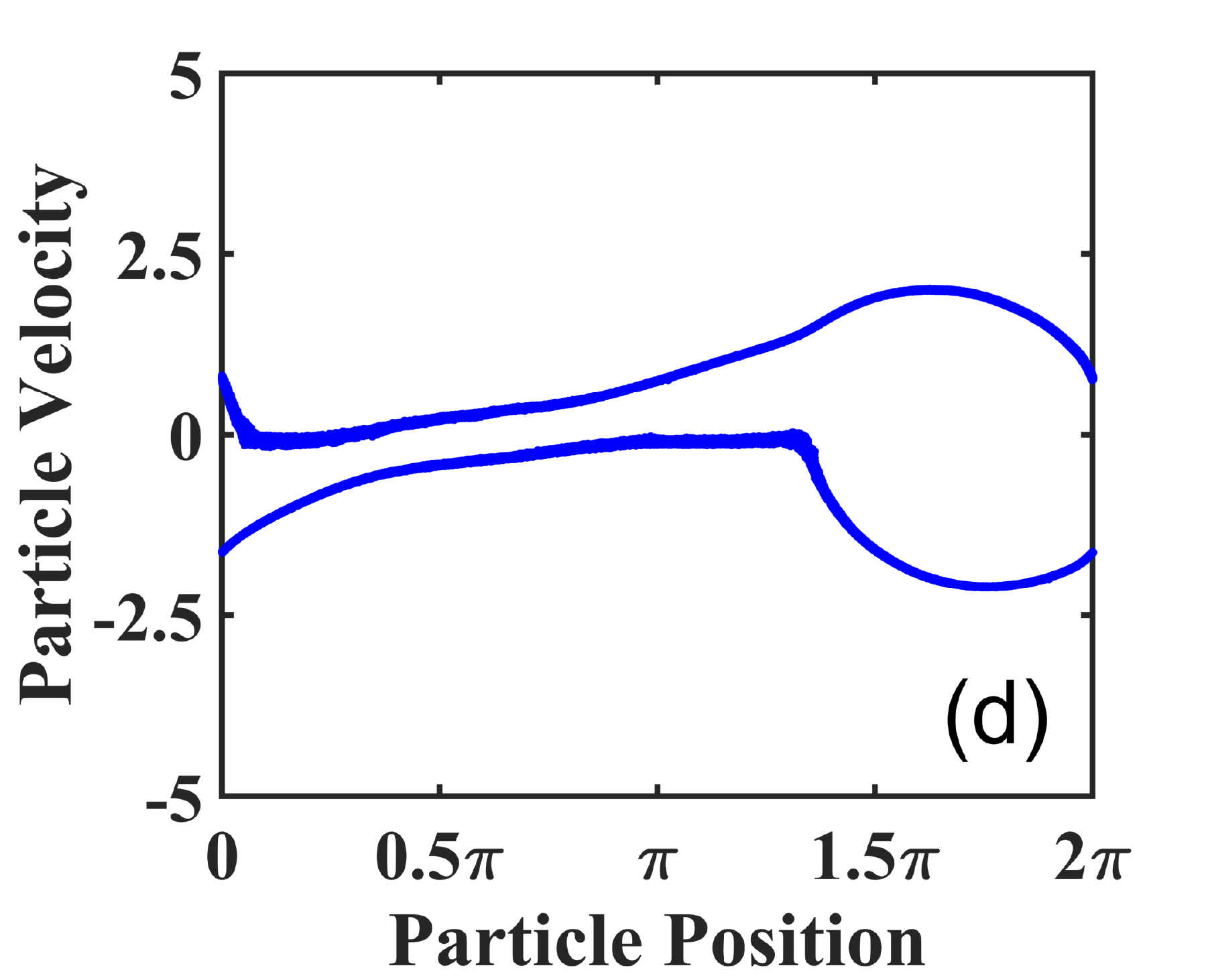}
\includegraphics[width=0.32\linewidth]{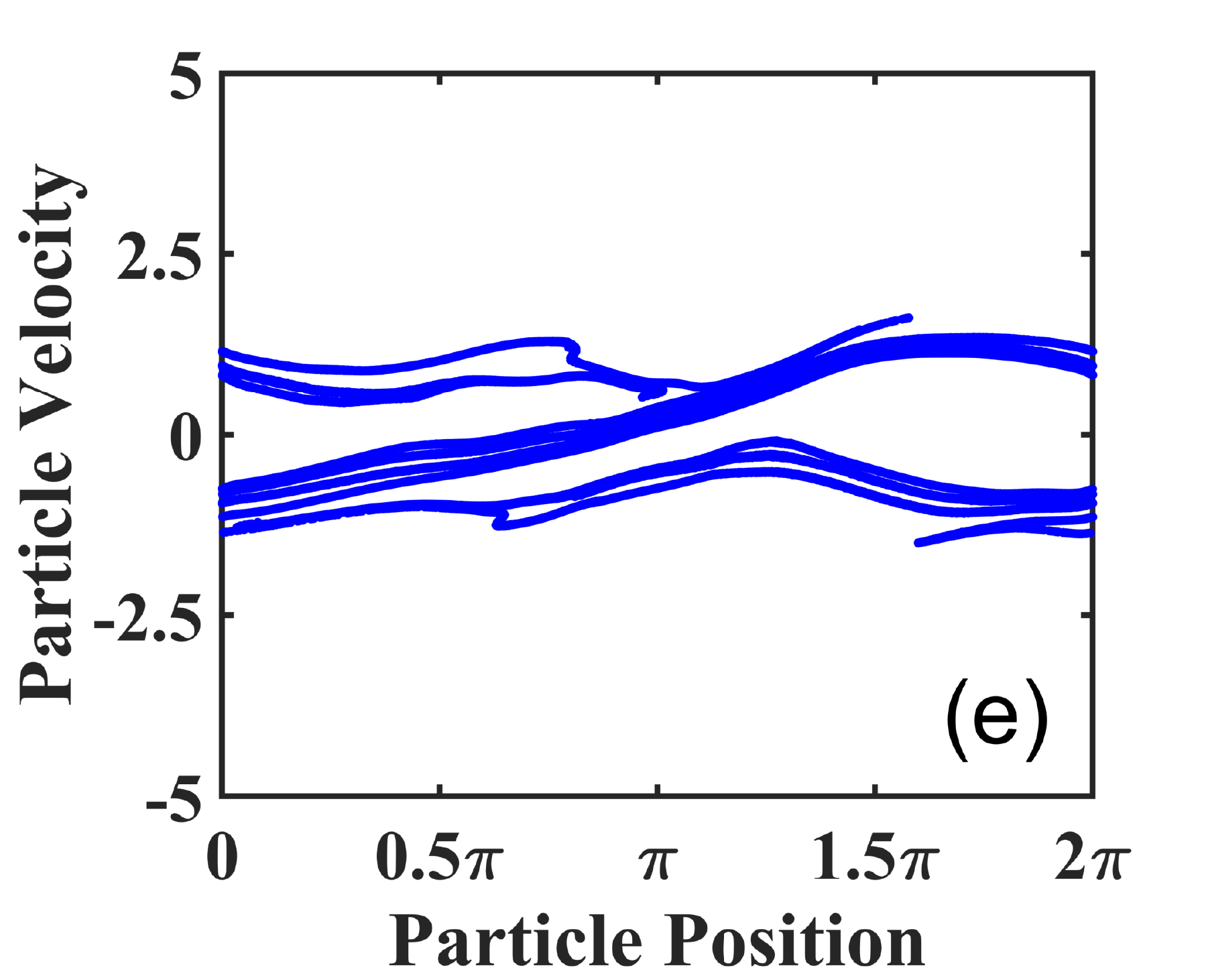}
\includegraphics[width=0.32\linewidth]{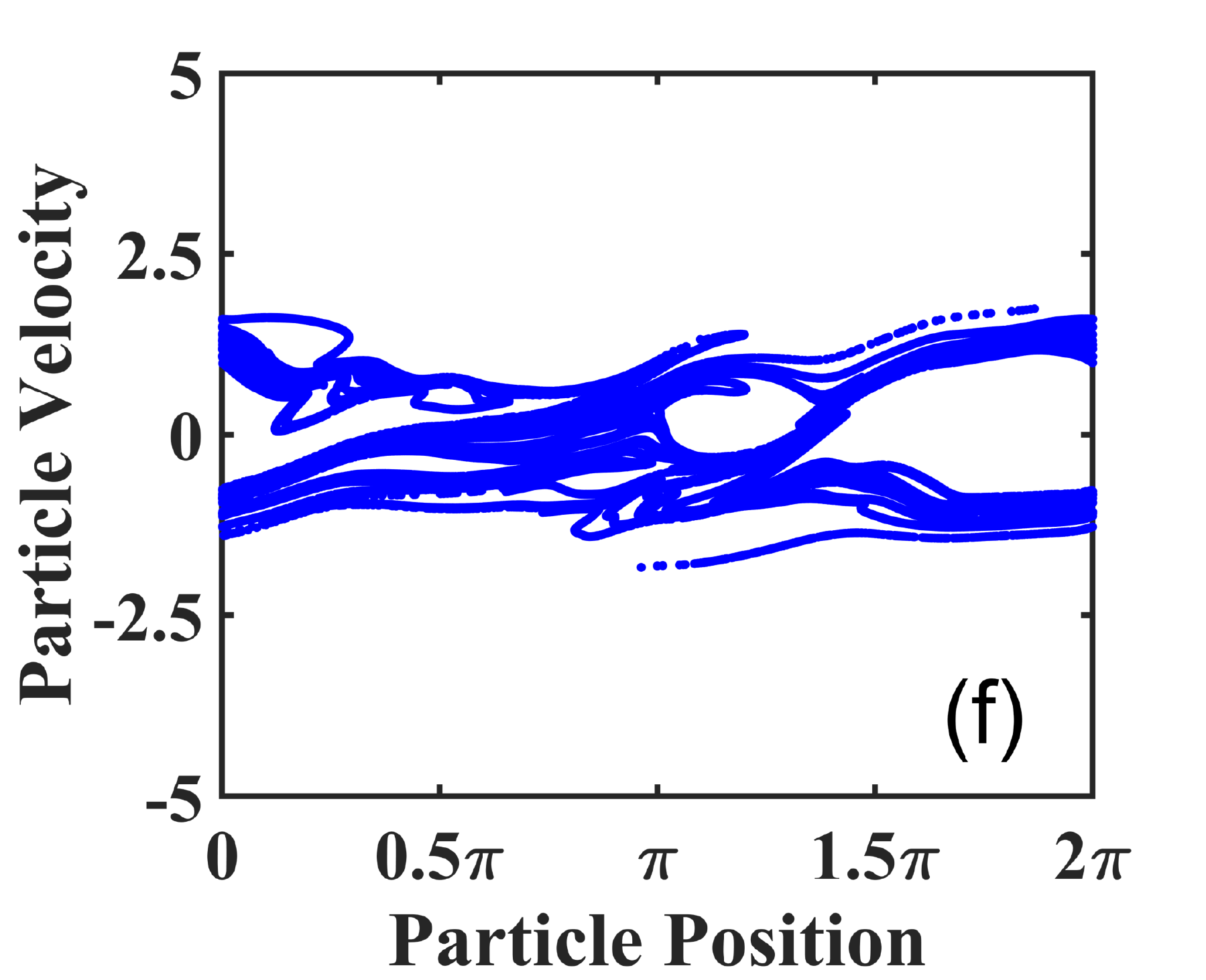}\\
\vspace{0.15mm}
\includegraphics[width=0.32\linewidth]{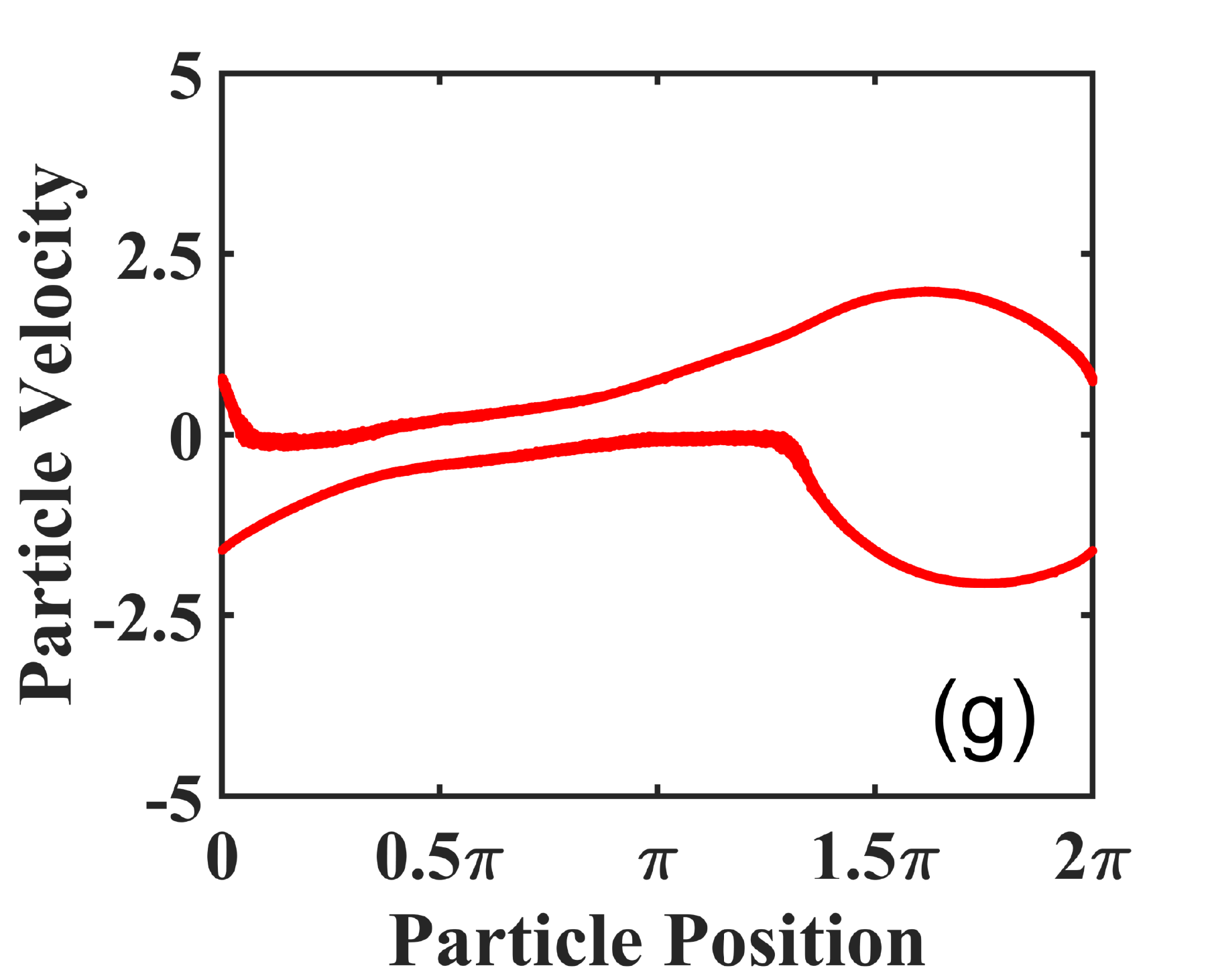}
\includegraphics[width=0.32\linewidth]{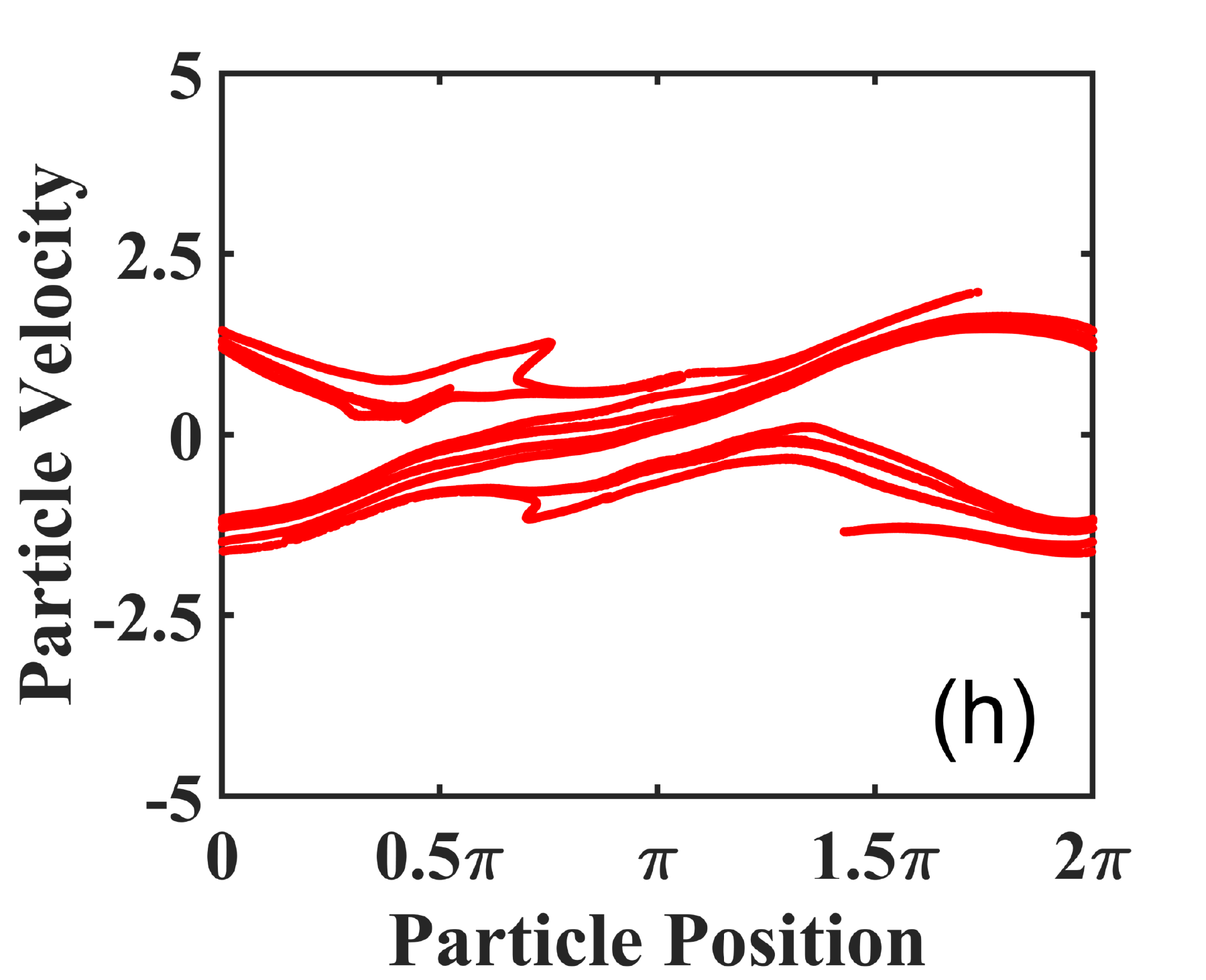}
\includegraphics[width=0.32\linewidth]{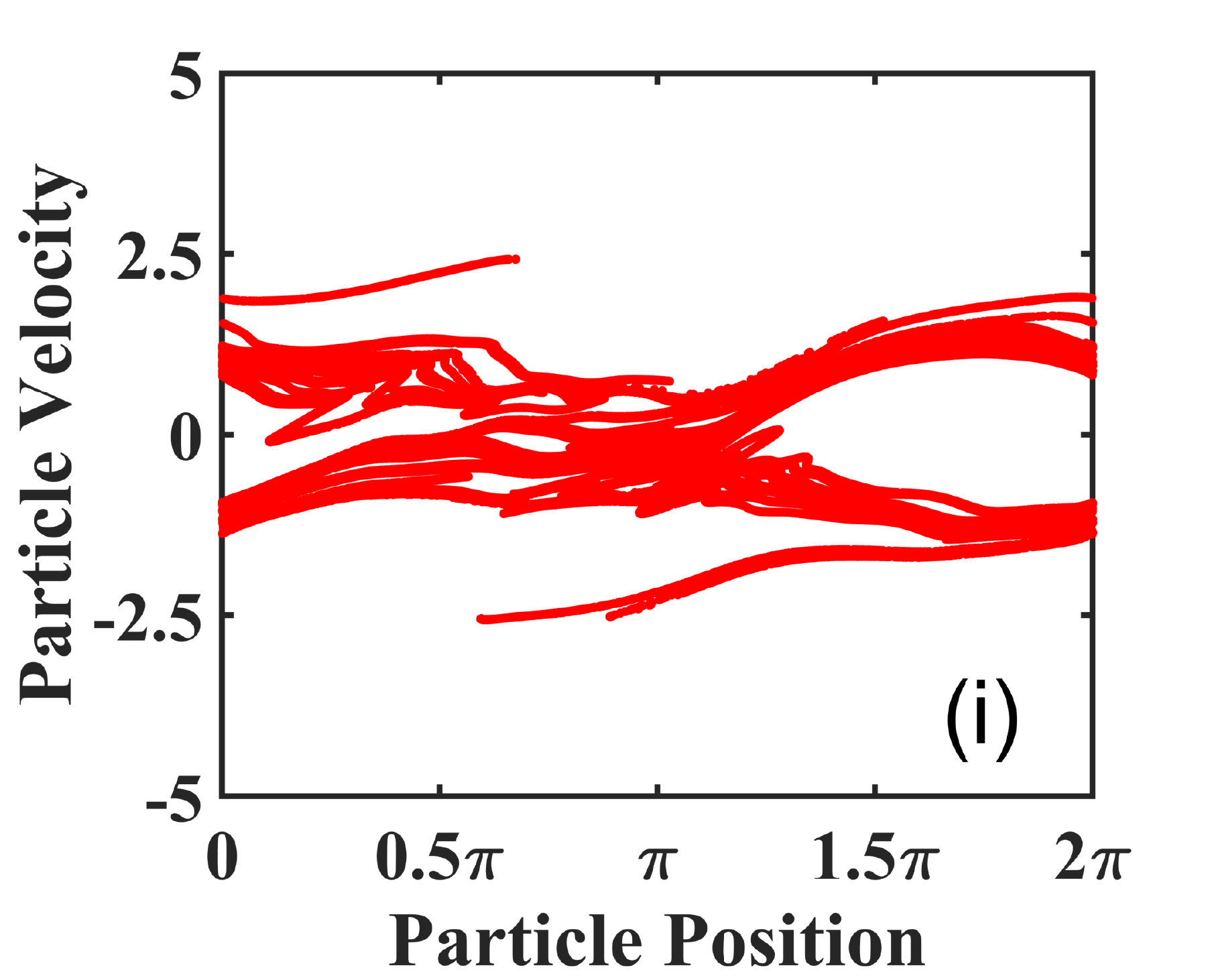}
\caption{Phase-space distributions of the macro particles at time $T=10, 15$ and $20$ for the two-stream instability. (abc) the classical explicit algorithm; (def) the AP algorithm; and (ghi) the APEC algorithm.}
\label{fig:twostream:p6}
\end{figure}

We then consider the under-resolved case by setting $\lambda =0.005$, $N_x=64$ and $\Delta t=0.1$. The other parameters are the same as the resolved case. Figures \ref{fig:twostream:p4_2} displays the results of the electric energy and the total energy evolutions with time. Figure \ref{fig:twostream:2p7} shows the APEC results for different time steps. Here, the initial total energy is $W_0=5.021160162244403$. Similar performance as the bump-on-tail instability can be observed for the three algorithms, i.e., the AP-type methods present stable simulations, but the explicit method is unstable. One can see from Figure \ref{fig:twostream:2p7}(b) that the energy charge of the APEC algorithm is at the machine precision, demonstrating the promising performance of this algorithm.

\begin{figure}[H]
\centering
\includegraphics[width=0.45\linewidth]{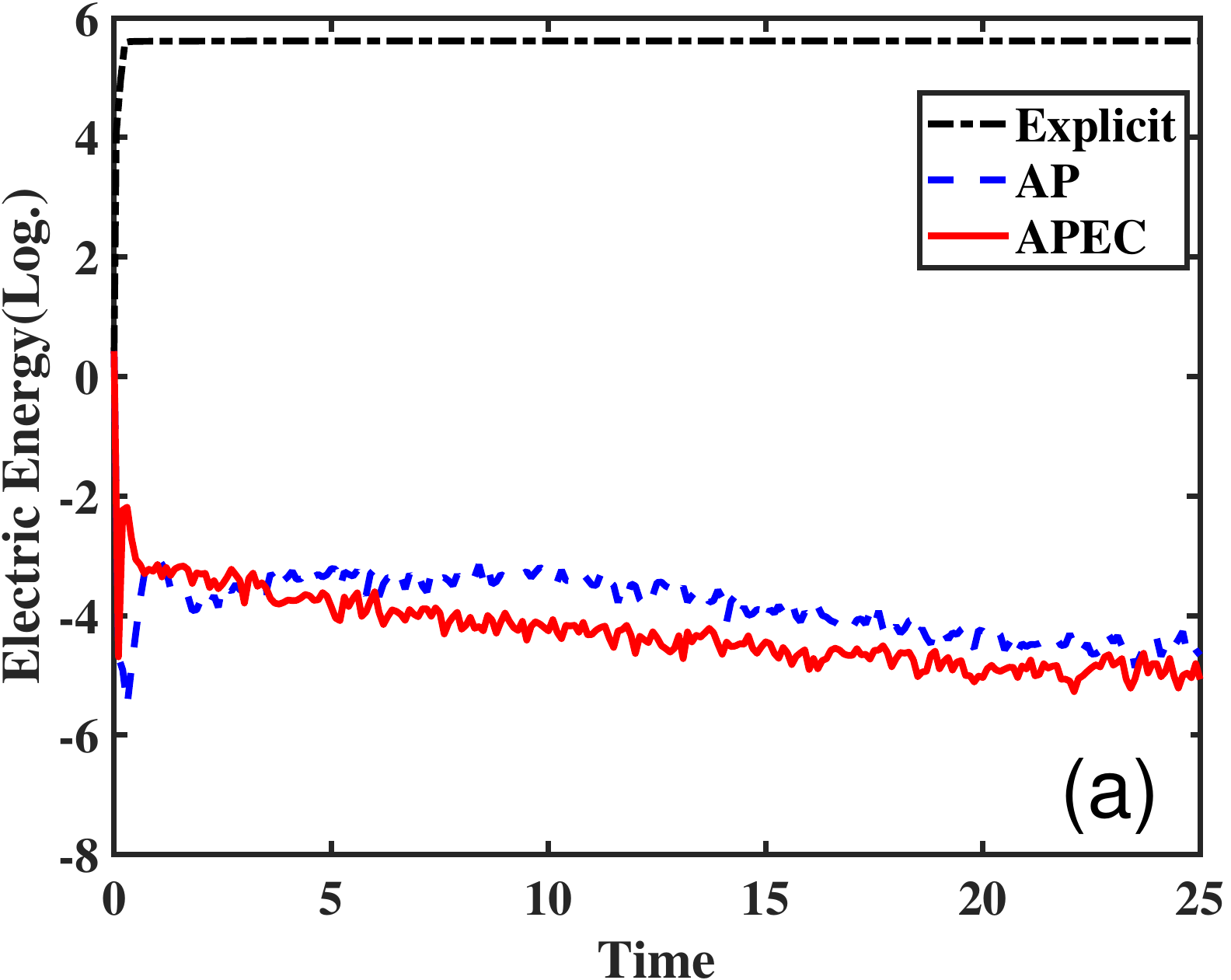}\hspace{2mm}
\includegraphics[width=0.45\linewidth]{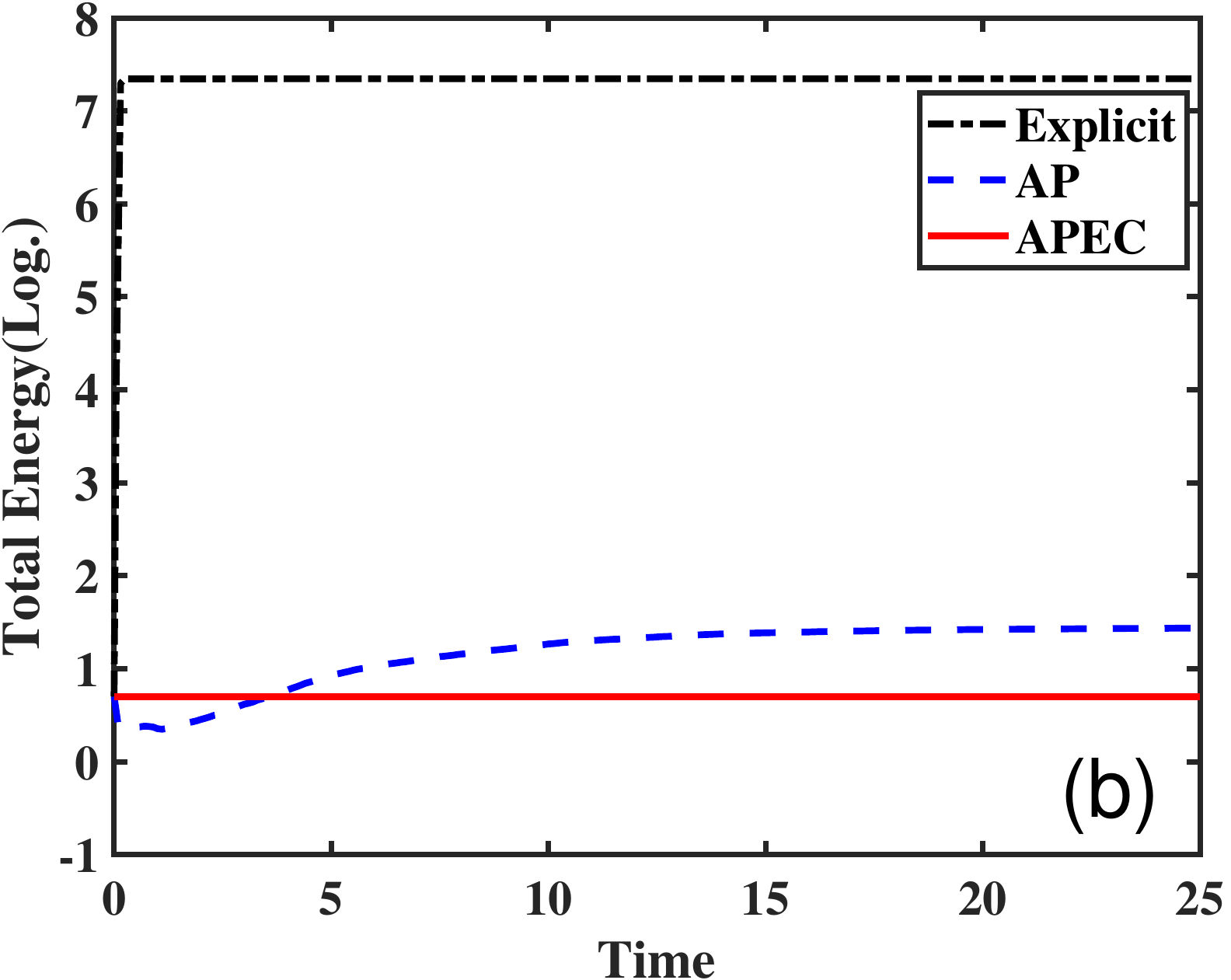}
\caption{Under-resolved case of the two-stream instability calculated by the classical explicit, the AP and the APEC schemes. (a) Electric energy with time; (b) Total energy with time.}
\label{fig:twostream:p4_2}
\end{figure}

\begin{figure}[H]
\centering
\includegraphics[width=0.45\linewidth]{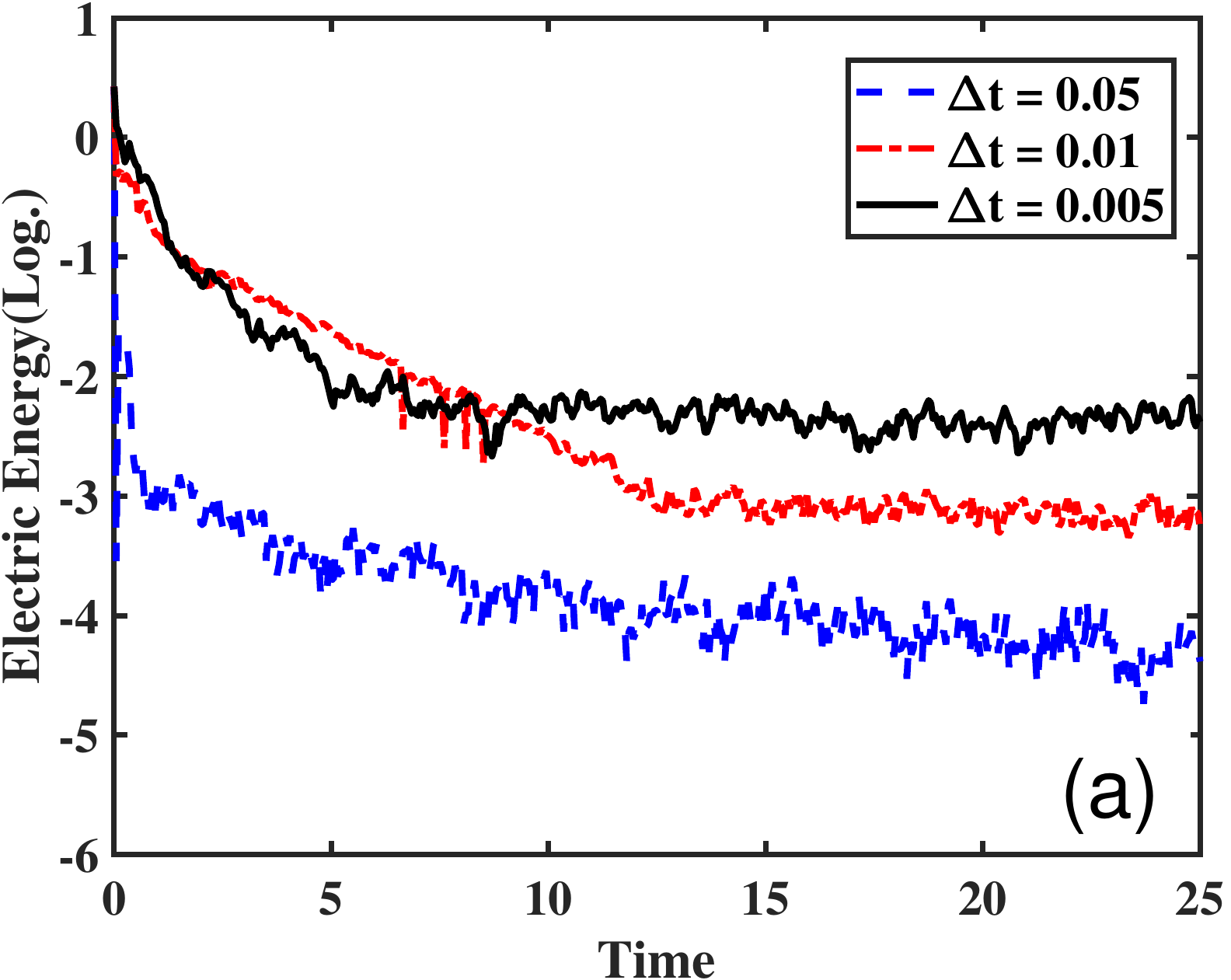}\hspace{2mm}
\includegraphics[width=0.465\linewidth]{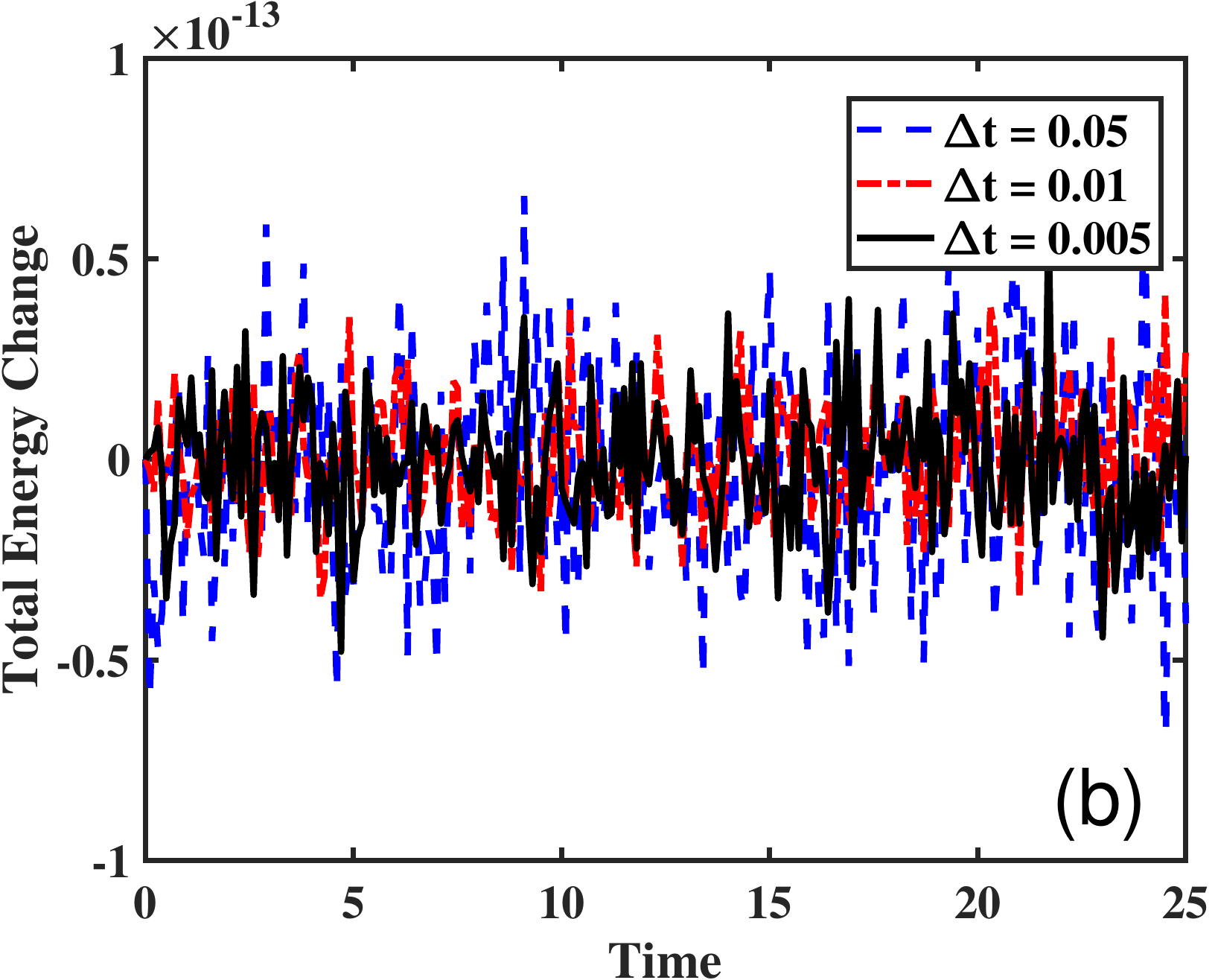}
\caption{The APEC algorithm for the two-stream instability with different time steps $\Delta t=0.005, 0.01$ and $0.05$. (a) Electric energy with time; (b) Total energy change with time.}
\label{fig:twostream:2p7}
\end{figure}

\section{Conclusion}
\label{sec:conclusion}

We have proposed an asymptotic-preserving and energy-conserving PIC algorithm for the Vlasov-Maxwell system. The algorithm can give accurate numerical solutions
on both non-neutral and quasi-neutral regimes  and the total energy is preserved exactly. In the quais-neutral regimes, the APEC method is still stable with large time steps and can capture the main mechanism
of the system while the reference explicit scheme will blow up. Several numerical tests are performed to demonstrate the attractive performance of the new algorithm.
In future, we will extend the PIC algorithm to high dimensional Vlasov-Maxwell systems for more complex plasma applications.

\section{Acknowledgement}
This work is funded by the Strategic Priority Research Program of Chinese Academy of Sciences (grant Nos. XDA25010402, XDA25010403, XDA250050500).
L. Ji acknowledges the support from China Postdoctoral Science Foundation No. 2021M702141.
Z. Yang acknowledges the support from the NSFC (No. 12101399) and the Shanghai Sailing Program (No. 21YF1421000).
Z. Xu acknowledges the support from the NSFC (grant No. 12071288).
D. Wu acknowledges the support from the NSFC (grant No. 12075204) and the Shanghai Municipal Science and Technology Key Project (No. 22JC1401500).
S. Jin acknowledges the support from the NSFC (grant No. 12031013).

\section{Data availability}
The data that support the findings of this study are available from the corresponding author upon reasonable request.

\bibliographystyle{abbrv}
\bibliography{APEC_PIC_Submission}

\end{document}